\def\BibTeX{{\rm B\kern-.05em{\sc i\kern-.025em b}\kern-.08em
    T\kern-.1667em\lower.7ex\hbox{E}\kern-.125emX}}
\definecolor{gray}{gray}{0.4}
\definecolor{shadecolor}{named}{LightGrey}
\newenvironment{example}%
  {\noindent\raggedright 
      \begin{shaded}\textbf{Example. }}{ 
   \end{shaded}}
\newtheorem{definition}{Definition}
\newtheorem{theorem}{Theorem}
\newcommand{\aflpp}{{\sc AFL\nolinebreak[4]\hspace{-.05em}\raisebox{.45ex}{\relsize{-2}{\textbf{++}}}}\xspace}
\newcommand{\afl}{{\sc {AFL}}\xspace}
\newcommand{\redqueen}{{\sc {RedQueen}}\xspace}
\newcommand{\weizz}{{\sc {Weizz}}\xspace}
\newcommand{\aflsmart}{{\sc {AFLSmart}}\xspace}
\newcommand{\libfuzzer}{{\sc {LibFuzzer}}\xspace}
\newcommand{\nautilus}{{\sc {Nautilus}}\xspace}
\newcommand{\kafl}{{\sc {KAFL}}\xspace}
\newcommand{\mopt}{{\sc {MOpt}}\xspace}
\newcommand{\qemu}{{\sc {QEMU}}\xspace}
\newcommand{\llvm}{{\sc {LLVM}}\xspace}
\newcommand{\clang}{{\sc {Clang}}\xspace}
\newcommand{\valgrind}{{\sc {Valgrind}}\xspace}
\newcommand{\gcc}{{\sc {GCC}}\xspace}
\newcommand{\daikon}{{\sc {Daikon}}\xspace}
\newcommand{\invscov}{{\sc {InvsCov}}\xspace}
\newcommand{\sage}{{\sc {SAGE}}\xspace}
\newcommand{\perffuzz}{{\sc {PerfFuzz}}\xspace}
\newcommand{\fuzzfactory}{{\sc {FuzzFactory}}\xspace}
\newcommand{\syzkaller}{{\sc {SyzKaller}}\xspace}
\title{Program State Abstraction for Feedback-Driven Fuzz Testing using Likely Invariants}
\author{Andrea Fioraldi}
\begin{document}

\frontmatter

\maketitle

\dedication{To my grandmother L., \\ who when she was a brilliant child,\\ despite the darkest hours of our country had just passed,\\ could not continue studying because the hearts of many were still black.}

\begin{abstract}
Fuzz testing proved its great effectiveness in finding software bugs in the latest years, however, there are still open challenges. Coverage-guided fuzzers suffer from the fact that covering a program point does not ensure the trigger of a fault. Other more sensitive techniques that in theory should cope with this problem, such as the coverage of the memory values, easily lead to path explosion. In this thesis, we propose a new feedback for Feedback-driven Fuzz testing that combines code coverage with the ``shape'' of the data. We learn likely invariants for each basic block in order to divide into regions the space described by the variables used in the block. The goal is to distinguish in the feedback when a block is executed with values that fall in different regions of the space. This better approximates the program state coverage and, on some targets, improves the ability of the fuzzer in finding faults. We developed a prototype using \llvm and \aflpp called \invscov.
\end{abstract}

\begin{acknowledgments}
I want to thank my advisors Davide and Daniele for the work done, in temporal order, to support me in the journey of the master thesis, despite a global pandemic and a busy student (myself) with too many other projects to carry on. I want to thank my family, my old and new friends, and the hackers' community that always I'm proud to be part of for the support and encouragement to pursue my work. This thesis was made mostly during my internship in the S3 lab of EURECOM, that I want to thank for the funding, the burned CPU cores, and the nice welcome.
\end{acknowledgments}

\tableofcontents

\mainmatter

\chapter{Introduction}

In the last two decades {\em Fuzz Testing} (or {\em Fuzzing}) gained popularity thanks to its ability in finding software bugs more effectively than other Software Testing techniques.

It is employed every day since 2016 in Google's {\sc OSS-Fuzz}~\cite{oss-fuzz} to continuously discover vulnerabilities in open source software and thousands of them were discovered so far in this four years of activity of the program.

This popularity comes also with the attention of academia and the industry on improving fuzzing techniques.

In recent times, on top of the twist of fuzz testing called {\em Coverage-guided Fuzz Testing (CGF)}, several new techniques were developed trying to overcome the limitations of CGF like hard to bypass path constraints~\cite{qsym}~\cite{sebastian}~\cite{driller}~\cite{vuzzer}~\cite{redqueen} or the high number of invalid testcases generated by mutation~\cite{aflsmart}~\cite{zest}~\cite{nautilus}~\cite{grimoire}~\cite{weizz}.

These improvements, with some exceptions like~\cite{parmesan}, try to improve the ability of a fuzzer to reach more code coverage. Code coverage is used as a proxy to {\em Program State Coverage} to avoid path explosion because the number of program states can be potentially infinite.
{\em Symbolic Execution}~\cite{baldoni} for instance does not employ such approximation, and in fact, path explosion is one of the most critical problems that make pure symbolic-based approaches impractical in real-world targets.

There is empirical evidence that fuzzers that uncover more code coverage discover also more bugs in programs. This can be motivated by the observation that exploring a portion of code is a necessary condition to find a fault in that code portion.

However, this is not a sufficient condition.

Fuzzers can saturate in coverage and never reach the combination of program states that leads to a bug. To cope with this problem, a fuzzer should observe the progress also in the program state data, not only in the control flow.

Recent works like~\cite{fuzzfactory}~\cite{ijon}~\cite{becollab} try to go beyond simple code coverage as feedback; we refer to these techniques --- CGF included --- as {\em Feedback-driven Fuzz Testing}.

Some fuzzers approximate the program state using more sensitive feedbacks, like code coverage with call stack information or even code coverage and values loaded and stored from memory. This second approach, as shown by~\cite{becollab}, better approximates the program state coverage by taking into account not only control flow but also the values in the program state data, but it is less efficient in finding bugs because of path explosion.

At the time of writing, the only successful approximation of the program state coverage using also values from the program state's data is done surgically on targeted program points selected by a human~\cite{ijon}. Portions of the state space are manually annotated and the feedback function is modified to explore such space more thoroughly.

The automation of this process is a crucial topic in future research in this field. 

In this thesis, we propose a new feedback for Fuzz Testing that takes into account not only Code Coverage, but also some interesting portions of the program states in a fully automated manner and without incurring path explosion.

We augment classic {\em Edge Coverage} --- a type of code coverage based on edges in the {\em Control Flow Graph} --- with information about ``unusual'' values in the program states observed in the incoming basic block.

We learn constraints between variables in basic blocks from executions traces of an input corpus, generally a corpus that is the output of a previous CGF fuzzing run, that describe how variables are related to each other and that holds for all the executions observed so far.

These constraints are mined {\em Program State Invariants} over basic blocks.

Execution-based invariants mining techniques, like the one that we use based on~\cite{daikon}, suffer from the well-known {\em Coverage Problem} that means that the learned constraints may be only local properties of the observed corpus, and the violations of a learned invariant may not lead to a violation of the program specification. This, however, is not a problem for our purpose.

Local properties, if enough generic like the learned invariants tries to be, are still an interesting abstraction of the program state.

So we define a new feedback function that distinguishes the same edge with learned invariants in the incoming basic block that holds from the same edge with one or more learned invariant violated.

We develop a set of heuristic to produce invariants and techniques to effectively instrument programs with a low-performance overhead --- a very important metric in fuzzing --- and we implement them into a prototype called \invscov on top of \aflpp~\cite{aflplusplus}.

In our evaluation, we show that a feedback that takes into account the program state abstraction can uncover more or different, software bugs than CGF.

\section{Contributions}

The key contributions of this thesis are:

\begin{itemize}
\item A new feedback that uses an abstraction of the program state from mined invariants;
\item A prototype implementation based on \llvm and \aflpp called \invscov;
\item A systematization of the concepts behind Feedback-driven Fuzz Testing.
\end{itemize}

We plan to share the prototype as Free and Open Source Software.

\section{Structure of the Thesis}

In Chapter \ref{cap:testing} we describe the basis of Software Testing and introduce various key concepts including the invariants.
In Chapter \ref{cap:fuzzing} we generically describe Fuzz Testing and introduce a new abstract taxonomy for Feedback-driven Fuzz Testing. We discuss also some of the challenges of Feedback-driven Fuzz Testing. 
In Chapter \ref{cap:meth} we introduce our methodology to abstract the program state's coverage. In Chapter \ref{cap:impl} we present our prototype \invscov and the technologies on which it is based.
In Chapter \ref{cap:eval} we evaluate the prototype in terms of efficiency and effectiveness.
The thesis ends with the conclusions and the discussion of future directions in Chapter \ref{cap:conclusion}.

\chapter{Basics of Software Testing}
\label{cap:testing}

{\em Software Testing} is the process that analyzes a {\em System Under Test} (SUT) to detect differences between existing and required conditions and to evaluate its features~\cite{testing_standard}.
The most common embodiment of Software Testing is the process of finding software bugs.

Bugs cause the software to produce incorrect results or to behave unexpectedly.
Bugs are a serious matter, they affect the everyday life of every person that depends on modern technology and, in the worst cases, cause even huge losses in terms of money~\cite{ariane} and human lives~\cite{therac}.

\section{Correctness}

In every stage of the system development, we can shape the existence of two entities: a {\em specification} and an {\em implementation}~\cite{laycock1993theory}.

The development process converts the specification into the implementation. A high-quality implementation means to satisfy as much as possible the specification.

An implementation that completely matches the specification would provide the highest quality, but such equivalence is impossible to be stated for a Software Testing process, otherwise, with such a process it would also be possible to solve the Halting Problem~\cite{rice}.

So the {\em correctness} of a SUT cannot be expressed in terms of equivalence between specification and implementation.

In the following, we provide some needed definitions that match with~\cite{glossary_standard}.

\begin{definition}
A {\tt failure} is an externally visible deviation from the specification.
\end{definition}

\begin{definition}
A {\tt fault} (or {\tt bug}) is portion of system state that leads to a failure. Note that if such a state exists but is never reached it does not cause a failure.
\end{definition}

\begin{definition}
An {\tt error} is a human error that causes the system to behave as not expected. Errors can cause faults.
\end{definition}

Given these concepts, we define the correctness as follows.

\begin{definition}
A {\tt correct} implementation of a specification does not contain faults.
\end{definition}

Correctness can be achieved by constructing a system without errors or by detecting and fixing all the faults.

In the first case, the absence of errors has to be proved formally. 


\section{Validation and Verification}

The quality of a system in Software Testing is assessed by two processes: {\em Validation} and {\em Verification}~\cite{softw_book}.

Validation is the process that evaluates if the system really meets the needs for whom it was built. It is a subjective process that includes for instance user evaluations and prototyping. We will not discuss validation in this thesis.

Verification is the process, more objective than Validation, that evaluates if the implementation behaves according to the specification.

Verification processes aim to remove all the faults from the system, but this does not guarantee of course that the system has a value, this is stated by Validation.

Note that sometimes, for instance in the beta testing stages of the software development cycle, the two processes are combined.

\section{Properties of Testing}


For the sake of the verification of a system, a testing procedure may continue to add tests until all the faults are uncovered.

However, with constrained resources, this is impossible, so tests have to be prioritized.

Another property of testing is the context-dependent nature of the techniques.
Testing the autopilot software of a plane requires far different methods than testing a toy like a Tamagochi.

Faults are often clustered, bugs are not uniformly distributed in a system. This, for instance, affects the prioritization of the tests.

Another very important property is, using Dijkstra's words, that software testing can be used to show the presence of bugs, but never to show their absence.

So, given these properties, we want a great diversity in our tests and so it is convenient to automatically generate tests.

\section{Automation in Testing}

Automation in Software Testing, most of the time, means sampling the input space of a SUT to generate testcases.
The goal, according to~\cite{testing_efficiency}, is to gain confidence about a certain degree of correctness or to find as many faults as possible.

There are also, however, unsuccessful techniques that try to find just one faulty input to prove incorrectness. The failure in proving incorrectness, of course, does not prove correctness, so we exclude these techniques from the treatment.

There are two types of sampling:

\begin{enumerate}
  \item {\em Systematic}, in which the generation is informed by some artifacts from the SUT, like the specification;
  \item {\em Random}, a uniform at random sampling of the input space that basically has no cost;
\end{enumerate}

Given that we excluded techniques that find just a failing input, we can reduce Systematic sampling to a partitions-based approach.

Each partition is a subdomain of the input space and the inputs in each partition have common properties.

An effective type of partition strategy, as shown by~\cite{error_part}, is the one that samples from {\em error-based} partitions.
Each partition triggers an error or not. The testing strategy, given that is not known if a partition is associated with an error, samples each partition using a systematic approach.
This is very effective but also hard to apply in the real world. Rather than that, several other automatic techniques are used in practice and we discuss some of them in Sec \ref{sec:testingtech}.

\subsection{Efficiency Criteria}

As the goal of Automated Testing is to gain a certain level of confidence about the correctness of the SUT or to uncover as much as bugs as possible, an efficiency criteria, as introduced in~\cite{testing_efficiency}, must relate these goals with time.

In particular, in the first case the goal of an Automated Testing is to establish the level of confidence in a minimal amount of time, and, in the second case, to maximize the number of found bugs in a given time.

When evaluating Systematic techniques, an useful insight from~\cite{testing_efficiency} is that if we increase the effectiveness of the technique we have to increase the cost and so decrease the efficiency.
This leads to a second important result that is when the SUT size is over a certain bound, Random Testing becomes more efficient than the evaluated Systematic approach.

\section{Testing Techniques}\label{sec:testingtech}

We can divide testing techniques into different families. In this section, we will discuss some of them.

\subsection{Specification-based Testing}

To automatically generate a test several sources of information can be used. The first source of information for a Systematic approach is the specification.

Techniques based on this concept are called {\em Specification-based Testing} and do not require any knowledge of the structure of the program (i.e. programming language, size of the codebase, etc.).

The key idea is that several tests are derived from the specification and each test covers a {\em partition} of the System Under Test.

A common criteria, in term of partition-based testing, is to divide the input space into unique partitions that are unique in terms of exercised program behavior and in which is easy for an oracle to verify if such behavior, given one input, is correct.

\subsection{Structural Testing}
\label{sec:structural}

Another type of technique that aims to generate testcases using the code itself as a source of information is called {\em Structural Testing}.

A piece of typical information extracted directly from the code is the notion of {\em Coverage}, a notion that indicates how much code is exercised executing an input.

A prominent objective of this type of testing is to test all the code maximizing the coverage seen, but other criteria are supported as well and this affects how tests are generated.

Several types of coverage can be defined, following we discuss the most common classes:

\begin{itemize}
\item {\em Line Coverage} is probably the most straightforward type of coverage, related to lines fo code covered. However, this is a problematic coverage because it is affected by the coding style and programming language density;

\item {\em Block Coverage} dope with the limitation of line coverage defining a more objective metric. Block refers to a block in the {\em Control Flow Graph (CFG)}~\cite{cfg}, which is defined at a high level as the paths that can be traversed in the code. A {\em Basic Block} is an aggregation of adjacent code lines executed without a control flow change (i.e. no branches), a {\em Decision Block} is a block containing the predicate that affects a control flow change and an {\em Edge} is the connector between these blocks. A basic block has only one exit, a decision one has two exits, one of the condition is true, the other if it is false. Block coverage is simply when the testing technique aims to cover all the blocks in the CFG instead of all the lines in the code;

\item {\em Edge Coverage} is always related to the CFG and it is used when block coverage is not enough in presence of complex branch conditions. Covering the edge coverage at 100\% means that all the decision branch are exercised;

\item {\em Path Coverage} is the coverage of all possible independent paths in the code. In term of the CFG, maximizing the path coverage means covering each possible path from each node to each other connected node;

\end{itemize}

\subsection{Model-based Testing}

A model of a system under test holds some properties and attributes of such a system in an abstract way.
Using such abstraction tests can be generated.

A widely used type of model is the {\em Decision Tables}, tables that relate actions that the systems perform and conditions to take that action.

A prominent version of model-based testing is {\em State-machine based Testing}.

A state machine describes the system using states and transitions between these states.

To derive testcases from a state machine, like for structural testing, we can define some types of test coverage:

\begin{itemize}
  \item State coverage: all the states have to be covered at least once;
  \item Transition coverage: each transition has to be covered at least once;
  \item Path coverage: exercise combination of transitions called paths;
\end{itemize}

The typical model-based testing workflow is bringing the system into different states and, after each transition, asserting that the system is in the expected new state.

\subsection{Property-based Testing}

In structural testing, we use information from the code to generate tests. In {\em Property-based Testing}, we use properties of the program to let the code itself to check the correctness.

The oracles that check if an execution of a test is related to a correct behavior are not anymore external but embedded in the code.

A common construct that developers employ to do that is the notion of {\em assertion}, a boolean expression inserted in a specific program point that, if false, reveals the presence of a bug.

Testcases can be then generated until an input that violates one of the assertions is found. {\em QuickCheck}~\cite{quickcheck} is one of the first tools developed that generates almost random inputs for the program under test in order to find testcases that violate one or more assertions.

Related to this type of testing is the software design pattern called {\em Design-by-contracts}. In this methodology, each caller component ensures that the preconditions to call a callee component are met.
The main advantage is that errors are caught by the code itself avoiding propagation and the computation of incorrect results.

The checks that test the code during the execution are based on the concept of {\em Invariant}, a property that is always true at one or more particular program points, as described in~\cite{ernst_phd}.

\subsubsection{Types of Invariants}

Two widely known types of invariants are pre and post conditions of programs. Firstly introduced by~\cite{Hoare1969}, with the term {\em Hoare Triples} we denote the triple $\{P\} A \{Q\}$, where P is the pre-conditions that holds before the execution of the program A, and Q are the post-conditions that holds after the execution. Of course, A can denote also functions in a program or even single statements.

\begin{example}
Consider the following C function that pop an element from a stack data structure:

\begin{lstlisting}[language=C]
struct item* stack_pop(struct stack* stack) {
  struct item* item = stack->base;
  stack->base = item->next;
  stack->size--;
  return item;
}
\end{lstlisting}

In this case, we can define this Hoare Triple:

\begin{itemize}
\item $P$: \lstinline[columns=fixed, language=C]{stack != NULL, stack->base != NULL, stack->size > 0};
\item $A$: the \lstinline[columns=fixed, language=C]{stack_pop} function;
\item $Q$: \lstinline[columns=fixed, language=C]{stack->size == original(stack->size) -1}, \lstinline[columns=fixed]{return == original(stack->base)};
\end{itemize}

You can easily see that if one of the preconditions is violated there is a fault (a Null Pointer Dereference or an Integer Overflow).

\end{example}

Another type of invariants are {\em Class Invariants}~\cite{hoare72}, that are strictly related to Object-oriented Programming.
These invariants refer to an object or class and hold for the entire lifetime of such an object.

Loop invariants~\cite{floyd1993assigning}~\cite{Hoare1969} are predicates over the state of a loop that holds for every loop execution.

Also, other types of invariants were defined in the literature, for instance like the invariants based on concurrency constraints in~\cite{lamport1977proving}. 

\subsubsection{Mining Invariants}
\label{subsec:mining}

Automatic invariants learning is a widely used process in verification, for instance for testcases generation~\cite{csallner2005check} or memory errors detection~\cite{reed1991purify}.

A valid approach to learn invariants from the code of a SUT is static analysis, using techniques like {\em Symbolic Execution}~\cite{baldoni}, of which~\cite{tillmann2006discovering} is an example, or {\em Abstract Interpretation}~\cite{blanchet2002introduction}, used in~\cite{giuffrida_psi}.

While invariants extracted using this type of analysis most of the time correct and without false positives, they are often overapproximation or simply the static analysis is not enough powerful to spot some invariants that are revealed only at runtime.

Opposite to that, many approaches like~\cite{diduce}~\cite{ernst2001dynamically}~\cite{pattabiraman2010automated} uses information gathered during the execution, in a dynamic fashion.

One of the most popular tools to extract invariants from program traces is \daikon~\cite{daikon}, that uses a machine learning approach to learn from traces that are in an abstract form, allowing the tool to support many programming languages and runtimes like C, Java, C\#, and many more.

The problem of these approaches is, unlike static approaches, that the extracted constraints are {\em likely invariants}, properties that hold at least for the observed executions. This, while allowing the analysis to reveal more invariants, leads to false positives with constraints that are only local properties of the observed executions. This problem is commonly called the {\em Coverage Problem} since, if the corpus of testcases used for learning does not cover the possible program state, at it is like to be in real applications, there may be still testcases with different coverage that violates the learned likely invariants.


\newpage



\chapter{The Art of Fuzzing}
\label{cap:fuzzing}

In this chapter, we systematically review the concepts behind {\em Fuzz Testing}, a successful Random Testing family of techniques.

In particular, we focus on {\em Feedback-driven Fuzz Testing}, a technique that evaluates generated testcases using feedback from the {\em System Under Test (SUT)}.

\section{Generic Definitions}
\label{sec:gendef}

\begin{definition}
{\tt Fuzz testing} or {\tt Fuzzing}, according to {\rm~\cite{survey}}, is the repeated execution of the Program Under Test (PUT) using inputs sampled from an input space and that stresses the PUT with unexpected inputs. We can generalize Fuzzing using SUT as subject instead of PUT.
\end{definition}

The usage of SUT is motivated by the fact that Fuzz Testing is nowadays extended to other domains, for instance, it can be used to test a web application composed of different distributed components or a set of programs communicating with each other like the IPC stack of a browser.

We simply call a program that implements a Fuzzing technique {\em Fuzzer}.

\begin{definition}
A {\tt Violations Oracle} is a process that determines if an execution of the SUT violates some requirements.
\end{definition}

An example of requirements is the correctness of the SUT or some performance requirements. We can also, for instance, assert that some code must not be reached if a specific configuration of the SUT is provided and use Fuzzing to try to find an input that executes that code under that configuration. Typically, many fuzzers look for crashes in a program, an easily observable type of failure.

\begin{definition}
The {\tt Fuzzer State} is the set of variables and artifacts (e.g. the instrumentation added to the SUT if any) that affects the behavior of the Fuzzer. They can evolve during the fuzzing process.
\end{definition}

Note that the fuzzer state is called configuration in~\cite{survey}, however, this is misleading while talking about tunable fuzzers like~\cite{aflplusplus}, so we use the term State instead.

A generic enough algorithm that can be used to define fuzz testing is \ref{alg:fuzz}.

\vspace{2mm}
\begin{algorithm}[H]
\label{alg:fuzz}
\DontPrintSemicolon
\KwResult{The set V of testcases with violations}
 $V \gets \emptyset$\;
 $S \gets \textsf{Preprocess}(S)$\;
 \While{$\textsf{Continue}(S)$}{
  $I \gets \textsf{InputGeneration}(S)$\;
  $S, V \gets \textsf{InputEvaluation}(I, O, S, V)$\;
 }
 \Return{$V$}\;
 \caption{Generic Fuzz testing}
\end{algorithm}
\vspace{2mm}

The abstract stages in the algorithms are defined accordingly to~\cite{survey} but in a more generic flavor.

We define such stages as follows.

\begin{definition}
The {\tt Preprocess} stage is executed once before the fuzzing loop and modify the initial Fuzzer state.
\end{definition}

\begin{definition}
The {\tt Continue} stage decides, looking at the current state, if the fuzzer must stop or continue to search for violations.
\end{definition}

\begin{definition}
The {\tt InputGeneration} stage generate a testcase for the current run using the state.
\end{definition}

Some Fuzzers can use for instance previous testcases embedded in the state to generate a new testcase.

\begin{definition}
The {\tt InputEvaluation} stage is the stage responsible to feed the SUT with the newly generated input and use the violations oracle $O$ to determine if this is a testcase that violates some of the requirements. It also updates the Fuzzer state with information useful in other stages.
\end{definition}

\section{Fuzzers Classification}

Until now we defined fuzzing in a generic way, unrelated to the actual System Under Test or to the specification.

We can go further in the definition of Fuzz Testing using classes based on how the Fuzzer generates the testcase and how much information about the SUT it needs.

Firstly, looking at the InputGeneration stage, we can define two types of generations:

\begin{itemize}
  \item {\em Model-based} generation uses a model of the input format of the SUT embedded in the Fuzzer State to generate the testcase from scratch. It can be for instance a grammar specified in the initial Fuzzer state by a human (e.g.~\cite{langfuzz}), a mined model using learning techniques (e.g.~\cite{learnfuzz}) or hardcoded generation rules in the algorithm (e.g.~\cite{csmith});
  \item {\em Mutation-based} generation uses previous testcases called {\em corpus} that has to be provided in the initial state or saved in the state during the previous executions of InputEvaluation to generate the testcase modifying a testcase in the corpus;
\end{itemize}

Note that the mutational generation can use different strategies to mutate an input, even a model of the input format like model-based generation like in~\cite{nautilus}~\cite{aflsmart}. The difference, in this case, is that the new testcase is not generated from scratch.

Orthogonally to the properties of InputGeneration, we can classify a fuzzer using the information that it needs from the actual SUT.

There are three common classes of fuzzers based on this criteria~\cite{godefroid2007random}:

\begin{itemize}
  \item {\em Black-box} fuzzers does not need any insight from the SUT. The first attempts at fuzzing are of this type, which is closely related (if not even equal) to traditional Random Testing. Note that the lack of information from the actual implementation does not imply the lack of information about parts of the specification. For instance, black-box fuzzers like Peach~\cite{peach} require a model of the input format to generate testcases;
  \item {\em White-box} fuzzers systematically inspect the state space of the SUT using internals information. White-box fuzzing can often overlap with Systematic Structural Testing. An example is \sage~\cite{sage} that tries to maximize code coverage using constraints gathered during the execution;
  \item {\em Grey-box} fuzzers stands in the middle of the two previous approaches. They collect minimal information from the SUT to better explore the input space while maintaining the performance overhead low. Traditionally, the information is collected during the SUT execution and it is code coverage, like in~\cite{aflwhitepaper};
\end{itemize}

The distinction between these categories is often unclear, they are commonly used in the Security community but it is a debatable taxonomy, in this traction, we will avoid it and use the taxonomy from Software Testing when possible.

\section{Feedback-driven Fuzzing}

There is a strong empirical evidence~\cite{oss-fuzz} that fuzzing with code coverage as feedback from the SUT increases the efficiency in terms of the number of founds faults in a given time window.

Many fuzzers rely on code coverage (mostly use edge coverage~\cite{aflwhitepaper}~\cite{libfuzzer}) as a feedback, but the technique is not limited to this particular information and, as shown by~\cite{ijon}~\cite{fuzzfactory}~\cite{becollab}, there can be many types of feedbacks (Sec. \ref{sec:feedback}) that a fuzzer can use to better explore the input space.

Fuzzers typically employ evolutionary algorithms to process the collected information from the SUT. A general enough version of those algorithms is Alg. \ref{alg:evolutionary}.

\vspace{2mm}
\begin{algorithm}[h]
\DontPrintSemicolon
\KwResult{The set V of testcases with violations}
 $V \gets \emptyset$\;
 $S[Corpus] \gets InitialCorpus$\;
 $S[SUT] \gets \textsf{Instrument}(S[SUT])$ \algorithmiccomment{Preprocess}\;
 \While{$\textsf{Continue}(S)$}{
  $T \gets \textsf{PickTestcase}(S)$\;
  $N \gets \textsf{Calibrate}(T, S)$\;
  \For{$i \gets 0$ \textbf{ to } $N$}{
    $I \gets \textsf{InputMutation}(T)$ \algorithmiccomment{Mutation-based InputGeneration}\;
    $S, V \gets \textsf{InputEvaluation}(I, O, S, V)$\;
  }
 }
 \Return{$V$}\;
 \caption{Basic Evolutionary Fuzz testing}
 \label{alg:evolutionary}
\end{algorithm}
\vspace{2mm}

InputEvaluation in Alg. \ref{alg:evolutionary} is responsible to evolve the Corpus in the fuzzer state.

In Feedback-driven Fuzzing it executes the SUT with the generated input and, looking at the gathered feedback, if the execution is interesting it adds the testcase to the corpus. We can formalize Feedback-driven InputEvaluation as in Alg. \ref{alg:input}.

\vspace{2mm}
\begin{algorithm}[h]
\DontPrintSemicolon
\KwData{The generated input I, the violations oracle O, the current fuzzer state S and the set V of testcases with violations}
\KwResult{The next fuzzer state S and the updated set V of testcases with violations}
 $Tr \gets \textsf{Execute}(I)$ \algorithmiccomment{Tr is the observed trace of the execution}\;
 \If{$\textsf{IsViolation}(O, Tr)$} {
  $V \gets \textsf{AddToSet}(V, I)$\;
 }
 \If{$\textsf{IsInteresting}(S, Tr)$} {
  $S \gets \textsf{EvolveCorpus}(S, I)$\;
 }
 \Return{$S, V$}\;
 \caption{InputEvaluation in Feedback-driven Fuzz testing}
 \label{alg:input}
\end{algorithm}
\vspace{2mm}

IsInteresting is deeply connected with the feedback chosen, which has to maintain in the fuzzer state progress information. In the simple case of code coverage, in the state, there is the coverage seen so far and IsInteresting returns true only if there is a previously unseen coverage in $Tr$.

To better define Feedback-driven Fuzz Testing, we can systematically define all of its entities and how they are related to the abstract stages defined in \ref{sec:gendef}. Note that these concepts are also abstract themself and we can use them to classify feedback-driven fuzzers.


\subsection{Oracle}
\label{sec:orac}

As defined in \ref{sec:gendef}, a {\em Violation Oracle} inspects the execution of the SUT to decide if a testcase violates or not the expected requirements.

For Feedback-driven Fuzz Testing we add another property to the oracles, concerning the feedback: the ability to distinguish between violating inputs that are duplicates in terms of violated requirements or some other criteria.

\begin{definition}
An {\tt Oracle} is an entity that inspects the execution of the SUT to determine if it violates the given requirements. Besides, it decides if a violating testcase is worth to be added to the set of violating testcase V based on some criteria.
\end{definition}

We said that this new property is somewhat related to the feedback because, typically, fuzzers uses the feedback to distinguish between crashing inputs that probably trigger the same bug. For instance, \afl~\cite{aflwhitepaper} use the coverage to distinguish crashes. If a new crash does not trigger new coverage in relation to the testcases already in V, it is discarded as a duplicate.

Other fuzzers like~\cite{honggfuzz} use the hash of the callstack just after the crash.

Another common type of oracle is the one employed by differential fuzzing that tests if the outcome of an implementation is the same of the outcome of another considered correct~\cite{aumasson2017automated}.

\subsection{Observation Channel}
\label{sec:obs}


\begin{definition}
An {\tt Observation Channel} is an entity that provides information about the SUT execution to the fuzzer.
\end{definition}

The execution trace $Tr$ in \ref{alg:input} is a possible outcome of an {\em Observation Channel}. It is used to get the information needed by the feedback.

An example of an observation channel is the shared memory of \afl. It is a shared bitmap between the target and the fuzzer that reports the coverage.

Note that the feedbacks uses the observation channels, but they are not limited to be used only for feedbacks~\cite{weizz}.

\subsection{Executor}

The concept of executing the SUT is not always the same. For instance, for in-memory fuzzers like \libfuzzer~\cite{libfuzzer} the SUT is a harness function, for hypervisor-based fuzzers like \kafl~\cite{kafl} instead the SUT can be the entire operating system.

All fuzzers, however, needs the same primitives to execute a SUT.

\begin{definition}
An {\tt Executor} is an entity with a set of violation oracles \ref{sec:orac}, a set of observation channels \ref{sec:obs}, a function that allows instructing the SUT about the input to test, and a function to run the SUT.
\end{definition}

Placing an input in the SUT can be a very different task depending on the type of executor. \libfuzzer just place it as arguments of the harness function, while other fuzzers can write to the program standard input or in a specific memory region when executing the target inside an emulator or a similar controlled environment.

\subsection{Feedback}
\label{sec:feedback}

As the technique is Feedback-driven, the concept of {\em Feedback} is quite important.

The fuzzer must interpret the information retrieved using the observation channels and relate it to the fuzzer state.

\begin{definition}
A {\tt Feedback} is an entity that defines how to interpret the gathered information from observation channels and how to evolve the corresponding fuzzer state, specifically how to evolve the testcases corpus based on that information.
\end{definition}

For instance, given an observation channel that reports the size of memory allocations, a feedback that aims to maximize these allocations size to spot out-of-memory bugs can be defined with the following sentence:

Given a map $M$ from the observation channel, each entry corresponds to a single program point and it is a 64-bit unsigned integer. The fuzzer state must maintain an accumulation map $A$ that, for each registered allocation program point, maintains the maximum size of the allocation seen so far. A testcase is added to the corpus (IsInteresting) if when updating $A$ with the corresponding $M$, at least one entry is updated.

In literature, some fuzzers make use of feedbacks that does not simply aim to maximize code coverage, like \perffuzz~\cite{perffuzz} that maximizes the execution counts of all the program locations to spot performance issues and \fuzzfactory~\cite{fuzzfactory} that implements different feedback functions based on maps.

Note that many fuzzers use a map as observation channel and a reduce function to evaluate if the collected information is interesting, but the notion of feedback is not limited to this particular embodiment.

\subsection{Input}

Until this point, we did not specifically define what is an input for the SUT because it is an abstract concept. It is, in general, a sample from the {\em Input Space}, all the possible data that the SUT can take from an external source and that affects its behavior.

In the straightforward case, the input is a simple file or buffer passed to a program, but it can be also, for instance, a sequence of actions~\cite{aflnet} or even different values read in different program points independently~\cite{vfuzz}.

\begin{definition}
An {\tt Input} entity defines one possible sample from the Input Space and can hold properties about the input itself, the relation between the input and the SUT, or the input and the specification.
\end{definition}

Note that the description of how the SUT consumes the input is provided by the Executor.

\subsubsection{Input metadata}

We refer to the properties that the Input can hold as {\em Input metadata}.

An example of metadata, that relates the input to the specification, is the virtual structure.

Fuzzers such as \aflsmart~\cite{aflsmart} or \nautilus~\cite{nautilus} maintains a tree representing the Abstract Syntax Tree of the input when parsed using an {\em Input Format Specification} provided by the user. That information comes from the specification of the SUT (e.g. we know that the program under test process PNG files) and it is used in the mutator \ref{sec:mut} to perform structure-aware mutations.

Another example of metadata is the tags extracted by \weizz~\cite{weizz} from the SUT using dynamic analysis. Like in the previous example, this metadata is used for mutation, but it relates the input with the SUT and not with the specification, because they are extracted using an observation channel.

A third, and naive, example of metadata that related the input with the SUT is the execution time, and one that is a property of the input itself is the size in bytes if it is a buffer. This kind of metadata is often used in \afl-like fuzzers to prioritize simpler inputs.

\subsubsection{Scheduling inputs}

The outcome of Calibrate in evolutionary fuzzing is affected by the current Input $T$ extracted from the corpus.

Calibrate controls how many fuzzing iterations have to be done mutating a testcase. It can employ algorithms based on the input metadata, for instance, the execution time.

These algorithms that schedule how many iterations are assigned to a testcase are commonly referred to as {\em Power schedules}. Several works addressed this problem like~\cite{aflfast}~\cite{entropic}, using, for instance, the triggered coverage as input metadata. An interesting insight from these works is that is convenient for some fuzzers to give more iterations to testcases that cover program points that are rarely stressed.

\subsection{Corpus}

The Corpus is a set of inputs in the fuzzer state. In Feedback-driven Fuzzing, it is necessarily connected to the Feedback.

\begin{definition}
A {\tt Corpus} is an entity that collects inputs that are interesting for one or more feedbacks, and defines how they are related to each other and how to feed the fuzzer with those inputs when requested.
\end{definition}

The Corpus is implemented as a data structure containing inputs, but it can differ a lot for each fuzzer.

For instance, \afl uses a queue to store inputs, but other fuzzers use just a simple container that feeds the fuzzer using a random selection algorithm.

The Corpus is also not unique. There can be a corpus that contains inputs interesting for just one feedback.

\subsubsection{Corpus transitions}

The Corpus can use inputs metadata to relate the inputs to each other and schedule how to serve testcases to the fuzzer when it needs them for a fuzzing run.

We can view the Corpus as an evolving entity not only when adding a new testcase, but also when the fuzzer requests the next testcase to fuzz.

We call this second type of evolution from a request to another {\em Corpus transition}.

When a Corpus randomly selects the next testcase, the transition is the simplest. When it is a queue, the transition is the \verb|get| operation of the queue.

More complex fuzzers employ algorithms to select smartly the next testcase. For instance, \afl uses the coverage triggered by each input to create a minimized subset of the Corpus that covers the entire coverage seen so far. With a high probability, only inputs in this subset, that is periodically updated, are fuzzed.


\subsection{Mutator}
\label{sec:mut}

\begin{definition}
A {\tt Mutator} is an entity that takes one or more inputs and generates a new derived one.
\end{definition}

A mutator can both modify the input and the related metadata. Some mutators work just on the metadata and then the change is replaced to the testcase.

The concept of mutator is deeply linked with the definition of the input for the SUT, typically for each type of input, there are specialized mutators.

\subsubsection{Scheduling mutations}

A mutator most times can apply more than a single type of mutation on the input. Consider a generic mutator for a byte stream, bit flip is one of the possible mutations but not the single one, there is also, for instance, the random replacement of a byte of the copy of a chunk.

When a mutator has a collection of mutations, what mutations have to be used for the specific input is a problem addressed scheduling the mutations.

Naively, for most fuzzers, the number of mutations is a random bounded number and the sequence of mutations is randomly chosen too. In more complex approaches, a scheduling algorithm is chosen.

For instance, \mopt uses particle swarm optimization to select mutations based on their effectiveness in finding new interesting input in past iterations of the fuzzer.

\subsection{Generator}


\begin{definition}
A {\tt Generator} is an entity that generates a new input from scratch possibly using some parameters.
\end{definition}

Opposed to input creation by mutation, there is input creation by generation. Feedback-driven Fuzzing is most of the time related to Evolutionary Fuzz Testing that, of course, needs a mutator to evolve the corpus. There are, however, situations in which generators are used in this kind of testing.

One is when the generator is invoked by a mutator, like when mutating a virtual structure of the input. Consider a mutator that operates on the Abstract Syntax Tree, it can randomly replace a subtree with a new one generated from scratch as a mutation. In this case, generation is a mutation.

Another situation is when the SUT itself asks for the input the fuzzer in different stages (e.g. a stateful network protocol when a request-response sequence is considered a single input), and the entire input that was initially created by mutation was already given to the SUT. In this case, the fuzzer has to generate an extension of the input previously generated by mutation and can do that using a generator.

A third situation, less explored, is using Feedback-driven Fuzzing not to evolve a corpus of testcases, but a corpus of parameters for the generator. Consider a grammar-based generator, a parameter that affects the generations is, for instance, the probability to choose a terminal node or go deeper in the grammar and continue exploring a nonterminal. These parameters can potentially be tuned using the feedback.

\subsection{Stage}

\begin{definition}
A {\tt Stage} is an entity that operates some actions on a single input.
\end{definition}

This definition of Stage is very abstract, we used it because it is just a proxy used as a building block of the fuzzing algorithm.

A mutational stage, given an input of the corpus, applies a mutator and executes the generated input one or more time. How many times this has to be done can be scheduled, \afl for instance use a performance score of the input to choose how many times the havoc mutator should be invoked. This can depends also on other parameters, for instance, the length of the input if we want to just apply a sequential bitflip, or be a fixed value.

A stage can be also an analysis stage, for instance, the {\em colorization} stage of \redqueen that aims to introduce more entropy in a testcase or the {\em trimming} stage of \afl that aims to reduce the size of a testcase. A possible stage can also, for instance, execute the SUT with particular instrumentation to extract some input metadata, like in \weizz.

\section{Challenges}


The latest state-of-the-art research in Fuzz Testing tries to address some of the challenges that make fuzzing less efficient.

\subsection{Roadblocks}

The most straightforward limitation, when dealing with coverage as feedback, is the code roadblocks for the fuzzer. Multi-byte comparisons are one type of roadblock because, given a generic byte stream mutator, it is nearly impossible to guess the exact value of the other operand of the comparison to flip the branch.

\begin{example}

Consider the following C code snippet:

\begin{lstlisting}[language=C]
void foo(int x) {
  if (x == 0xbabdcafe)
    bug();
}
\end{lstlisting}

If x is the input provided by the fuzzer, the probability that the branch is flipped guessing that x should be 0xbabdcafe is almost 0.
\end{example}

In literature, this issue is addressed using a feedback that track the progress of the comparison~\cite{laf}~\cite{value-profile}, with concolic execution combined with fuzzing~\cite{qsym}~\cite{sebastian} or with techniques that extract the comparison values and try to replace patterns in the input~\cite{vuzzer}~\cite{redqueen}.

Other common roadblocks are the checksums. They are mostly used to protect chunks of binary formats against corruption, the probability that a generic fuzzer mutates the protected bytes and, at the same time, restore the checksum field validity is almost 0.

This problem can be addressed using a specific mutator for the binary format or with code transformation that patch the program removing the checksum checks~\cite{redqueen}~\cite{tfuzz}~\cite{weizz}.

\subsection{Invalid inputs}

Another challenge of fuzzers with generic mutators is the high rate of generated invalid inputs.

When the mutator likely breaks the validity of the input, the fuzzer stresses most of the times the code related to parsing, but not deeper code.

In order to effectively fuzz deep paths, the fuzzer must produce valid inputs, and this can be achieved using a model of the input format to guide the mutator, like in~\cite{aflsmart}~\cite{nautilus}.

Some techniques tried to approximate that automatically, without a human written model~\cite{grimoire}~\cite{weizz}~\cite{zeller_issta20}.

Another approach is to constrain the mutator to not touch the portion of the input that leads to the deep path and even constraint the possible values that an input field can take using, for instance, constraints collected using concolic tracing~\cite{pangolin}.

\subsection{Faults without Failures}

Most oracles in fuzzers use failures to know if a testcase violates the requirements, in particular crashes, but sometimes a fault does not directly trigger a failure. 

To catch these kinds of bugs, the SUT is often instrumented with additional tripwires to catch silent faults. For instance, a one-byte overflow in read on the heap will unlikely trigger a crash in a C program. To handle this situation, source-based fuzzers offer the possibility to instrument the programs with sanitizers such as {\sc AddressSanitizer}~\cite{asan}.
Others make use of binary-only tripwires to uncover silent corruptions~\cite{marius}, inserted both dinamically, like the \qemu-based sanitization available for \aflpp~\cite{qasan}, or statically~\cite{retrowrite}~\cite{revngfuzz}.

These sanitizers however cannot catch some pure logic bugs, and fuzzing to uncover this kind of bugs automatically (e.g. without putting manually assertions in the code) is an open field of research.

\subsection{State Tracking}

The code coverage, or some extension of it like comparisons feedback, is often not enough to explore the state space of a SUT. For instance, the code that handles a specific type of chunk in a binary format is considered always the same in terms of coverage, even when the previously processed chunk is different. The interleaving of different chunks is still an interesting property of those programs and often bugs are related to it, but most fuzzers cannot get feedback from it because code coverage does not notice it.

The current method to overcome this challenge is to manually select some state variables and get feedback from them manually~\cite{ijon}.

The technique proposed in this thesis tries to address this particular challenge automatically.

\subsection{Path Explosion}

When the sensitivity of a feedback increases, for instance when adding feedback about the progress of comparisons to edge coverage or when using context-sensitive coverage, the fuzzer may save as interesting too many testcases. These testcases will be never processed all, and the fuzzer saturates. An example of feedback that easily leads to path explosion, as described in~\cite{becollab}, is memory coverage, a feedback that considers a testcase as interesting if the execution of the SUT uses a previously unseen zone of memory.

Path coverage too easily leads to path explosion, and it is one of the most prominent problems of techniques that use it like Symbolic Execution~\cite{baldoni}.

\subsection{Scaling Implementations}

A current limitation of available fuzzers implementation is scaling on multiple CPU cores. System calls used by the fuzzer for various tasks are often not designed for scaling, they use expensive locks in the kernel or performs unneeded slow tasks for our purpose. The usage of Inter-Process Communications primitives offered by the OS to communicate between the fuzzer and the target instrumented program is an example.

A possible solution to this problem is embedding the fuzzer in the target application itself, like \libfuzzer does, and use a custom kernel extension or hypervisor to snapshot the program state in case of code that cannot be fuzzed stateless. Mocking the syscalls performed by the target helps too, usually, this task is performed when using an emulator to instrument the program.

Besides that, note that an exponential increase of the cores, as shown in~\cite{empiricalLaw}, increases only linearly the ability to find new coverage or new bugs.

\subsection{Hard Targets}

The problem of instrumenting, executing, and feeding a SUT with the produced testcases is not trivial in practice.

While traditional tools like \afl instruments programs that take their input as a file, many applications and systems are not designed to behave in that way.

On the other hand, many programs can be adapted to consume a buffer produced by the fuzzer, like when using a \libfuzzer harness, and others can be instrumented to request pieces of input when needed to the fuzzer instead of using a single buffer~\cite{vfuzz}, but there are targets that consume their inputs in other ways that are either not easy to handle or for which even execution or instrumentation is hard.

In the first case, a SUT of this kind is a complex system like a kernel or a multiprocess environment like IPC stacks. For instance, \syzkaller~\cite{syzkaller} provides inputs to the kernel using a sequence of syscall invocations and fuzzing using a hypervisor with incremental snapshots~\cite{whatthefuzz} can be used to test multiprocess environments that share messages.

In the second case, the SUT is typically an embedded system. Often the execution of firmware is possible only on specific hardware, which makes instrumentation and hardening to catch silent faults impossible~\cite{marius}. To address this problem, the code can be executed on an emulated hardware~\cite{firmafl}~\cite{usbfuzz}, with a high cost in terms of development effort, or re-hosted~\cite{halucinator}~\cite{unicorefuzz}, a technique that transfers the context from the target to an emulator and forwards the interactions with the hardware back to the device only when needed.



\section{Evaluation Criteria}

The evaluation of fuzzing techniques is a matter currently under debate in the community. 

A recent work~\cite{fuzzeval} analyzed some papers on fuzzing and stated what should and should not be done to compare fuzzers. The number of testcases or reported crashing inputs is not an evaluation metric, and using them is a very bad practice. This work states that an effective metric is the number of triaged bugs triggered by each fuzzer or the coverage over time as a good proxy.

However, this approach is limited and too general when we want to evaluate the properties of fuzzers. A fuzzer that triggers less but unique bugs is good and, in with the same spirit, a fuzzer that covers unique coverage is good too. Recently, this metric of counting unique code blocks covered was introduced in FuzzBench~\cite{fuzzbench}.

Other metrics that are useful to evaluate fuzzers are which fuzzer reach a certain coverage in less time~\cite{gamozo} and the number of hits of each block for some randomly sampled generated input, in order to evaluate the ability of a mutator to stress deep paths~\cite{blog_cornelius}.



\newpage

\chapter{Methodology}
\label{cap:meth}

Code coverage as feedback for Feedback-driven Fuzz Testing is a successful proxy to approximate the program state during the exploration of the paths performed by the fuzzer.

Edge Coverage-based Fuzzing found thousand and thousand of bugs in complex applications in the lastest years~\cite{oss-fuzz}, however, it suffers in exploring program states that lead to bugs but that are not directly related to code coverage.

A naive solution may be a Fuzzing algorithm that uses code and variables or memory values as feedback, but this quickly leads to path explosion, as shown by~\cite{becollab}.

In this chapter, we describe or technique that tries to cope with this problem augmenting the feedback given by classical Edge Coverage-based Fuzzing using {\em basic blocks invariants} violations to approximate the program state coverage.

\section{Definitions}
\label{sec:methdefs}

In this section, we introduce some needed definitions for the rest of the chapter, in part according to~\cite{daniele_pldi18}.

\begin{definition}
A {\tt Program} is a sequence of instructions.
\end{definition}

In this dissertation, the SUT is a program.

\begin{definition}
The {\tt Memory State} is a function $M(a) \rightarrow v$ that associates a {\tt Memory Address} $a$ to a value.
\end{definition}

\begin{definition}
The {\tt Static Single Assignment form (SSA)} {\rm~\cite{ssa}} is a type of intermediate representation (IR) in compiler theory. SSA requires that every variable is defined before each use and assigned only once.
\end{definition}

Without loss of generality, every program can be converted to SSA~\cite{ssa_convert}.

\begin{example}
Consider this simple C function that computes the maximum of two integers:

\begin{lstlisting}[language=C]
int max(int x, int y) {

  int m;
  if (x > y)
    m = x;
  else
    m = y;
  return m;

}
\end{lstlisting}

We can convert this simple program to SSA avoiding assigning two times the variable \verb|m|. At first glance, this seems impossible because its value depends on the control flow, but we can use the $\Phi$ operator of SSA. This operator defines a new variable choosing between two possible values depending on the control flow.

For our simple example, we can define three different SSA variables that represent \verb|m| in the two blocks of the if statement and in the terminal block that ends the function.

The translation to an SSA pseudocode is then the following:

{
\SetAlgoNoLine%
\setlength{\interspacetitleruled}{0pt}%
\setlength{\algotitleheightrule}{0pt}%
\setlength{\algoheightrule}{0pt}%
\begin{algorithm}[H]
\DontPrintSemicolon
\SetKwFunction{FMain}{max}
\SetKwProg{Fn}{Function}{}{}
\Fn{\FMain{$x$, $y$}}{
  \eIf{$x > y$}{
    $m_1 \gets x$\;
  }{
    $m_2 \gets y$\;
  }
  $m_3 \gets \Phi(m_1, m_2)$\;
  \KwRet $m_3$\;
}
\end{algorithm}
}

\end{example}

\begin{definition}
The {\tt Load} instruction $v := M(a)$ assign to a variable the value associated with the address $a$, that is a variable itself.
\end{definition}

\begin{definition}
The {\tt Store} instruction $M' := Store(M, a, v)$ alter the codomain of a memory state changing the value corresponding to $a$ with $v$.
\end{definition}

\begin{definition}
A {\tt Basic Block} {\rm~\cite{gcc_bb}} is a straight-line sequence of instructions in the program with only one entry point and only one exit.
\end{definition}

\begin{definition}
The {\tt Control Flow} is the order in which each individual basic block is executed and evaluated. A {\tt Control Flow instruction} is an instruction that changes that order.
\end{definition}

\begin{definition}
The set of the {\tt Live Variables} at a basic block is the set of variables that are used in the block or possibly used in any block reachable from it later in the execution.
\end{definition}

This set can be calculated using a backward analysis called {\em Liveness Analysis}~\cite{daniele_pldi18}. Note that as in SSA a variable is used only once, redefinition is not an issue for our definition of liveness.

\begin{definition}
A {\tt Program State} is the triple $(l, V, M)$, in which $l$ is the program point of the next instruction to be executed, $V$ is the set of the live variables and $M$ is the current memory state.
\end{definition}

In our simplified model of the computation in a program, this definition of program state alone is enough to describe the data in the program.

Note that, if we have to observe a program state during the execution of a program, we have to do it after the execution of the current instruction, because the next instruction $l$ is determined only after the evaluation of the instruction in case of control transfer instructions.

\section{The Basic Block State}
\label{sec:bbstate}

\begin{definition}
The {\tt State of a basic block} is the set of all the SSA variables values used in the block.
\end{definition}

Such states can be easily observed after the execution of the basic block because SSA variables are assigned only once and so all the used values are still in the variables.

We also define a family of functions that describe basic blocks.

\begin{definition}
Given a basic block $X$, the function $BB_X(V_I, M_I) \rightarrow (V_O, M_O)$ is the function that takes the set of live SSA variables $V_I$ and the memory state $M_I$ as they are before the execution of X. $BB_X$ returns a new set $V_O$ of SSA variables that are the variables used in X (note that $V_I \cap V_O$ is, in general, not empty) and the memory state $M_O$ that is the memory state of the program after the execution of X.
\end{definition}

Following this notation, $V_O$ is thus the basic block state.

Always using the properties of SSA, we can derive this property of the basic block state:

\begin{theorem}
\label{teo:bbmem}
The side effect in memory of a basic block, $M_O \setminus M_I$, is a subset of the basic block state $V_O$.
\end{theorem}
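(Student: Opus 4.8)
The theorem states that the memory side effect of a basic block $X$, written $M_O \setminus M_I$, is a subset of the basic block state $V_O$. To prove this I first need to pin down what these objects mean operationally. The memory side effect $M_O \setminus M_I$ is the set of address-value pairs that differ between the memory state after executing $X$ and the memory state before; concretely, these are the entries $(a,v)$ that were written during the execution of $X$, i.e. the results of the \texttt{Store} instructions inside $X$. The basic block state $V_O$, by the definition in Section~\ref{sec:bbstate}, is the set of all SSA variable values used in $X$.

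\textbf{The plan.}

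The plan is to argue by examining the only instructions that can modify memory, namely the \texttt{Store} instructions. First I would observe that a basic block is a straight-line sequence of instructions, so any change to $M_I$ arises from a finite sequence of \texttt{Store} operations $M' := \mathit{Store}(M,a,v)$ executed within $X$. For each such store, the altered entry is $(a,v)$, where by the definition of the \texttt{Store} instruction both the address $a$ and the stored value $v$ are themselves variables (operands of the instruction). The crucial second step is to invoke the SSA property: every operand of every instruction in $X$ is an SSA variable that is \emph{used} in $X$, and since in SSA a variable is assigned exactly once and must be defined before use, each such operand value is live and present as one of the values that $V_O$ collects. Hence both $a$ and $v$ appear among the SSA variable values used in the block, so the pair $(a,v)$ is determined by values in $V_O$, giving $M_O \setminus M_I \subseteq V_O$ once we identify each written entry with the operands that produced it.

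\textbf{Anticipated obstacle.}

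The main subtlety is reconciling the types: $M_O \setminus M_I$ is naturally a set of (address, value) pairs, whereas $V_O$ is a set of SSA variable values, so the ``subset'' relation requires a careful reading of how a written memory entry is represented in terms of the variables used in $X$. I would handle this by noting that, under the SSA discipline used throughout the chapter, a store's address and stored value are exactly variables used in the block, and thus every component of a side-effect entry is already an element of $V_O$; the identification is therefore immediate rather than requiring any constructive encoding. A secondary point to address cleanly is overwriting within the block: if $X$ stores to the same address twice, only the final value survives in $M_O \setminus M_I$, but that final value is still the value of an SSA variable used in $X$, so the inclusion is unaffected. I expect the genuine work to lie in stating these correspondences precisely rather than in any nontrivial computation, since the result follows directly from the straight-line structure of basic blocks and the single-assignment guarantee of SSA.
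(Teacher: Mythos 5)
Your proof is correct and takes essentially the same route as the paper's own argument: both rest on the observation that in an SSA IR only \texttt{Store} instructions can alter the memory state, and that a store's operands are SSA variables used in the block and hence belong to $V_O$. Your additional care about the type mismatch (address-value pairs versus variable values) and about overwrites within the block only makes explicit what the paper's one-line proof leaves implicit.
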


\begin{proof}
In an SSA IR, only the store instruction can affect the memory state. A store instruction $s$ takes in input the memory location and the value $v$ that has to be stored and so if $s$ is part of a basic block, $v$ is a variable in the basic block state.
\end{proof}

\section{Program State Abstraction}


\begin{definition}
Given an observed execution of a program, a {\tt Program Trace} can be described as a chain of $BB_X$ functions that take as input the output of the previous function in the chain. 
\end{definition}

The order of the functions in the chain is given by the order of the executions of the basic blocks in the observed execution of the program, which is the Control Flow.

We can use the program trace to observe the transitions between program states, and, using the following theorem, we can do it efficiently at the end of each block instead of for each instruction.

\begin{theorem}
\label{teo:trace}
Given a program trace, the observation of each basic block state from the outcome of each $BB_X$ function provides the same information about changes in the program state of observing the values during each program state transition.
\end{theorem}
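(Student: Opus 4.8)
The plan is to decompose the program state $(l, V, M)$ into its three components and to argue, for each one separately, that the instruction-by-instruction observations made inside a single block carry no more information than the single observation of the basic block state $V_O$ taken at that block's exit; chaining this over the whole trace then gives the claim. First I would fix one block $X$ occurring in the trace and compare two observation strategies restricted to $X$: a fine strategy that records $(l, V, M)$ after every individual instruction of $X$, and a coarse strategy that records only $V_O$ together with the position of $X$ in the chain of $BB_X$ functions. One inclusion is immediate, since the block-exit observation is a selected subset of the instruction-level observations, so the fine strategy never carries less than the coarse one; the substantive direction is to show the fine strategy carries no \emph{more}.

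For the control-point component $l$ and the memory component $M$ the argument is direct. Because a basic block is by definition a straight-line sequence of instructions with a single entry and a single exit, the sequence of program points traversed inside $X$ is completely determined once $X$ is known, and the order of the blocks themselves is exactly the Control Flow, i.e. the order of the $BB_X$ functions in the chain; hence the full instruction-level sequence of $l$ values is recoverable from the coarse data. For memory I would appeal directly to Theorem~\ref{teo:bbmem}: since $M_O \setminus M_I \subseteq V_O$, and since a store $M' := Store(M, a, v)$ takes both its address $a$ and its stored value $v$ as variables used in the block, both lie in $V_O$, so the entire memory side effect of $X$ is reconstructible from the basic block state.

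The variable component $V$ is where the work lies, and here I would invoke the single-assignment property of SSA. Because every SSA variable is assigned exactly once, its value is invariant from its definition point onward, so observing a variable at the block exit returns precisely the value it acquired at its definition; freezing the observation to the block boundary therefore loses nothing about that variable's value relative to an instruction-level observer. Every value that participates in the computation appears in $V_O$, the set of values used in $X$, and the only values escaping $V_O$ are those defined in $X$ but never subsequently used; these are dead and cannot influence any later program state, so they contribute nothing to the observable changes of the program state.

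Combining the three components shows that the coarse observation of $V_O$ determines all program-state changes effected by $X$, and I would then chain the argument across the successive $BB_X$ functions in trace order to conclude. I expect the delicate step to be the variable component: one must justify via SSA that restricting observation to the block boundary is information-preserving, and must carefully dismiss defined-but-unused values as irrelevant to the observable evolution of the program state, which relies on the scoping of the basic block state to \emph{used} values rather than all defined ones.
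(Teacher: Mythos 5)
Your proposal is correct and follows essentially the same route as the paper's proof: the same decomposition of the program state into program point, variables, and memory, the same appeal to Theorem~\ref{teo:bbmem} for the memory side effects, the same use of the straight-line control flow of a block to recover the instruction-level sequence of program points, and the same reliance on the single-assignment property of SSA for the variable component. The only cosmetic difference is that the paper counts newly defined variables as tracked in the block state observed at the exit, whereas you dismiss defined-but-unused values as dead and hence irrelevant; both resolve the same point.
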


\begin{proof}
A transition in the program state causes a change in:
  \begin{enumerate}
  \item obviously, the program point;
  \item in the set of live variables, if a new variable is defined or another is not live anymore;
  \item in the memory state, if the current instruction is a store;
  \end{enumerate}

We know, from Theorem \ref{teo:bbmem}, that the new values in the memory store are also tracked in the basic block state.

Given that the control flow of a basic block is unique, if we observe the basic block state after the execution of a block, we get that:
  \begin{enumerate}
  \item each instruction was necessarily executed, so we know all the program points in the transitions;
  \item all the live variables for each instruction are still in the basic block state because, by definition, the liveness as a basic block granularity, and the newly defined variables are tracked because we observe them at the end of the block, so after all the definitions;
  \item the changes in the memory states are in the basic block state;
  \end{enumerate}

So, observing the changes in terms of values in the program states inside a basic block at the end of the execution of such block is the same as doing it for each instruction.
\end{proof}

Note that for our theoretical model we are assuming a program without exceptional control flow, as usually these paths are not interesting to fuzz because related to errors and our treatment is simplified.

For each block, this information is a valid approximation of observing each individual program state.

Obviously, in a real-world implementation, logging a program state is not feasible even for a small number of executed instructions due to the occupied space. Logging the changes is a good trade-off and, as shown by Theorem \ref{teo:trace}, we can do it at the end of each block.

Now consider the basic block state as a space on the SSA variables. We can divide such space into subspaces using relations between the SSA variables.

These relations define hyperplanes in such space, and so also subspaces are implicitly defined by these hyperplanes.

\begin{example}
Consider a basic block state with two variables, $x$ and $y$.
If we have, for instance, the relations $y > x - 8$ and $x < 100$ we have that the following four subspaces are defined: $y > x - 8 \land x < 100$, $y > x - 8 \land x >= 100$, $y <= x - 8 \land x < 100$, $y <= x - 8 \land x >= 100$.
If we violate one of the relations, we are in another subspace than when we do not violate them. If we violate both, there is another subspace.
\end{example}

A main idea is that if these relations describe coherently different partitions of possible values of the variables (e.g. there is a bug if $x \% 2 = 0$) we can use the produced subspaces as an abstraction of the program state in the basic block.

\section{Mining Subspaces}

The relations between variables in a basic block state that delimitate the subspaces have to be chosen carefully in order to have a meaningful division of state space.

As our goal is to find bugs, an effective definition of such relations can be as {\em Basic Block Invariants}.
This type of invariants, used in works like~\cite{davide_php}~\cite{giuffrida_psi}, are relations that theoretically always hold in a basic block and describe all the possible values of a variable.

When one or more of such relations are violated, we are in a subspace that is related to an incorrect program state reached in this basic block.

As the SSA form guarantees us that the input variables of a block are not modified, we can avoid checking the pre-conditions of a Hoare Triple over the basic block because the same invariants are also present in the post-conditions, that use the basic block state. Note that in the basic block state the memory information is only related to the changes made in the current block, so implicit constraints between variables and memory values are missing.

This could be an effective technique to detect bugs, but mining these relations is hard.

As described in \ref{subsec:mining}, some approaches try to learn the invariants with static analysis techniques, others try to learn the invariants from many executions traces of the program under test.

The first approach may miss invariants and the second may generate relations in which the violations are not related to a real bug, but is a local violation regards the learned data from the corpus of the execution traces, a likely invariant.

These over-approximated invariants are however interesting even if the violations maybe not be related to a bug. In the abstraction of the program state, a local property is still interesting and the definition of subspaces based on this kind of invariants is still a valid approximation.

Thus, we can exploit the coverage problem to learn relations about common states of the variables and define the subspaces as spaces in which the variables assume ``unusual'' values that may or may not be related to a bug.

\section{An Invariants-based Coverage}

Given the relations extracted using an execution-based invariants mining technique, we can define a new type of feedback for Fuzz Testing that is based on the abstract program state coverage defined by the subspaces of the basic block state.

The control flow information is, like in traditional Coverage-guided Fuzzing, approximated by the observed edges in the Control Flow Graph.
This information is augmented with the subspace in which the variables of the incoming block are located.

So, a fuzzer saves an input not only when a new edge previously unseen is executed, but also when the program explores a new subspace of the incoming basic block state for that edge.

\begin{example}

Consider the basic block from the previous example in Sec. \ref{sec:bbstate}. There are four possible spaces defined by two invariants.

Assuming that the block can generate $N$ possible branches, it can produce $N * 4$ different feedback items for the fuzzer.

However, the observation of all the $N * 4$ cases is an extreme case in which all the likely invariants can be violated at the same time. In practice, some invariants may never be violated when another is violated too, or never violated at all.

\end{example}

To check if a particular state violates the learned invariants for a block, this technique has to emit these checks at the end of the blocks using code generation.

During the execution, the boolean information about the violation of a single invariant is then propagated to the next executed block to report the edge information augmented with the position of the observed values in the defined subspaces of the incoming basic block.

\section{Pruning Invariants}

Variables in the basic block state are, however, not always related to each other and this can produce useless likely invariants.

Besides, invariants that are impossible to violate, even in case of a fault, are not relevant for our technique.

To cope with these two problems, which pollute our coverage and increase the number of checks that must be generated --- so increasing complexity and decreasing the execution speed of the program under test --- we devise some optimizations for the invariants miner.

On the side of invariants checking, after the miner extracts the likely invariants, we can avoid checking for duplicate invariants to speed up the execution.

\subsection{Comparability Calculation}
\label{sec:comp}

\vspace{2mm}
\begin{algorithm}[h]
\DontPrintSemicolon
\KwData{The IR function F containing all the IR instructions in that function}
\KwResult{The function $C \colon Instructions \longrightarrow Comparability$ that relates an instruction I to a comparability id}
 $C \colon \textsf{GetAllValues}(F) \longrightarrow \{\epsilon\} $\;
 \For{$I$ \textbf{\upshape{in}} $\textsf{GetInstructions}(F)$}{
  \If{$\textsf{IsUnary}(I)$}{
    $C, Id \gets \textsf{MergeComparability}(C, Id, I, \textsf{GetOperand}(I, 1))$\;
  }
  \ElseIf{$\textsf{IsCast}(I)$}{
    $C, Id \gets \textsf{MergeComparability}(C, Id, I, \textsf{GetOperand}(I, 1))$\;
  }
  \ElseIf{$\textsf{IsBinary}(I)$}{
    $C, Id \gets \textsf{MergeComparability}(C, Id, I, \textsf{GetOperand}(I, 1))$\;
    $C, Id \gets \textsf{MergeComparability}(C, Id, I, \textsf{GetOperand}(I, 2))$\;
  }
  \ElseIf{$\textsf{IsGEP}(I)$}{
    $C, Id \gets \textsf{MergeComparability}(C, Id, I, \textsf{GetPointerOperand}(I))$\;
    $O_1 \gets \textsf{GetIndexOperand}(I, 1)$\;
    \For{$O$ \textbf{\upshape{in}} $\textsf{GetIndexOperands}(I) \setminus O_1$}{
      $C, Id \gets \textsf{MergeComparability}(C, Id, O_1, O)$\;
    }
  }
  \ElseIf{$\textsf{IsLoad}(I)$}{
    $C(I) \gets Id$\;
    $Id \gets Id +1$\;
  }
 }
 \Return{$C$}\;
 \caption{Comparability set computation}
 \label{alg:comp}
\end{algorithm}
\vspace{2mm}

\vspace{2mm}
\begin{algorithm}[h]
\DontPrintSemicolon
\KwData{The function $C \colon Instructions \longrightarrow Comparability \cup \{\epsilon\}$, the progressive counter $Id$, the IR values $V_1$ and $V_2$}
\KwResult{The function $C \colon Instructions \longrightarrow Comparability \cup \{\epsilon\}$ and the progressive counter $Id$}
  \If{$C(V_1) \neq \epsilon \land C(V_2) = \epsilon$}{
    $C(V_2) \gets C(V_1)$\;
  }
  \ElseIf {$C(V_1) = \epsilon \land C(V_2) \neq \epsilon$}{
    $C(V_1) \gets C(V_2)$\;
  }
  \ElseIf {$C(V_1) \neq \epsilon \land C(V_2) \neq \epsilon$}{
    $C(V_1) \gets Id$\;
    $C(V_2) \gets Id$\;
    $Id \gets Id +1$\;
  }
  \Else {
    \For {$V$ \textbf{\upshape{in}} $\textsf{Domain}(C)$}{
      \If{$C(V) = C(V_2)$}{
        $C(V) \gets C(V_1)$\;
      }
    }
  }
 \Return{$C, Id$}\;
 \caption{MergeComparability auxiliary algorithm}
 \label{alg:mergecomp}
\end{algorithm}
\vspace{2mm}

Firstly, we try to create independent sets of related variables in a IR function.

Variables that are not directly related but are related to a common third variable are in the same set. These sets are called {\em Comparability} sets and can be encoded using a function $C \colon Instructions \longrightarrow Comparability \cup \{\epsilon\}$ that relates each instruction in the function to a comparability set, identified by an ID in this case, or to default comparability $\epsilon$, that represents the comparability with all the other variables.

Algorithms \ref{alg:comp} and \ref{alg:mergecomp} describe how we compute $C$.

Each variable is initially related to $\epsilon$, then, the list of the instructions is walked and if the current instruction is of a certain kind (e.g. a binary operator) each variable is marked as related to the other. For instance, if the instruction is an addition, the comparability set of the result is merged with the sets of the two operands. The first time that a variable is hit, a new comparability id is assigned instead of $\epsilon$. Variables that are used in instructions that are not in the cases shown in \ref{alg:comp} maintain the $\epsilon$ comparability, an over-approximation needed to not lose interesting invariants.

The kinds of instructions that are considered in Alg. \ref{alg:comp} are the unary instructions, operators with a single operand, binary instructions, with two operands, cast instructions, that convert a value to another with a different type, load instructions, and GEP instructions, that compute an address given a pointer and a set of indexes.

\subsection{Inviolable Invariants}

As our technique is based on violations of likely invariants, we want to avoid the generation of invariants that are always inviolable, also in the case of a fault.
Learning basic invariants from the actual code of the program leads to such types of invariants.

\begin{example}

Consider a \lstinline[columns=fixed, language=C]{unsigned int} variable in C. It will be mapped to many different IR variables, but all of them will never be negative.
An inviolable invariant is so that these variables are always greater than or equal to 0. This is a useless check for the invariants' coverage, it will be never violated.

\end{example}

{\em Value Range Analysis}~\cite{range_analysis} is the technique that, if used with a conservative approach, allows us to define bounds to integer variables that always hold. 

Consider the definition of a constraint variable $Y$ in one of the following ways:

\begin{itemize}
  \item $Y = [a, b]$
  \item $Y = \textsf{Merge}(X_1, X_2)$
  \item $Y = X_1 + X_2$
  \item $Y = X_1 * X_2$
  \item $Y = a*X + b$
  \item $Y = X \sqcap [a, b]$
\end{itemize}

The $\textsf{Merge}$ operator merges two variable names into one~\cite{merge_vsa}, $\sqcap$ is the range intersection.

These definitions can be easily extracted from a program in SSA form.

The Range Analysis objective, as explained in~\cite{integer_range}, is to solve a constraint system with variables in the same form of $Y$ and associate each variable to an integer range (with $+\infty$ and $-\infty$).

So, with this technique applied to SSA variables used to produce invariants, we can extract over-approximated ranges for each variable and exclude these invariants from the learning output, as they always hold.

\subsection{Deduplicate Invariants}

Basic blocks defined in terms of a sequence of instructions in a program are, like generic basic blocks defined in \ref{sec:structural}, nodes in the {\em Control Flow Graph (CFG)}.

We can define dominance relations between blocks in the CFG as in~\cite{dominator}.

\begin{definition}
A basic block A {\tt dominates} a basic block B if every path in the CFG from the root to B must go through A.
\end{definition}

Related to this definition, there are other definitions:

\begin{definition}
A basic block A {\tt strictly dominates} a basic block B if A dominates B and A $\neq$ B.
\end{definition}

\begin{definition}
A basic block A {\tt immediately dominates} a basic block B if A strictly dominates B but does not strictly dominate any other block that strictly dominates B too.
\end{definition}

The immediate dominator is unique and every block (except the root one) has one.

\begin{definition}
The {\tt Dominator Tree} is a tree defined with the basic blocks as nodes and the edges as the immediately dominates relations.
\end{definition}

As the immediately dominates relation is unique for the dominated block, the dominator tree is a tree in which a block can immediately dominate multiple blocks but can be immediately dominated only by one.

From a node, we can go backward in the dominator tree to find all the nodes that strictly dominate such nodes.

We can make use of such definitions to develop a method to prune duplicate invariants between blocks.

After the likely invariants phase, the produced checks have to be placed in the code and the outcome of each check is the identifier of the invariant if violated.
If two IR variables are used in two blocks, the blocks will likely share an invariant.

With the dominator tree of a function, we can optimize this phase and avoid the insertion of redundant checks.

The actual check is emitted only for the top-level (the nearest to root) dominator that shares the invariant, and the outcome is propagated to the dominated blocks without the need to execute again the check.

\section{Corpus selection}
\label{sec:corpus}

We rely on a mining approach based on execution traces and this needs several testcases to generate different traces. Like for previous evolutionary fuzzing techniques, the choice of the initial corpus is critical.

Unlike traditional CGF, in which indeed the input corpus affects by a lot the performance of the fuzzer, an unwise choice of the initial corpus for our technique can produce biased results, not just degrade
 performance.

For instance, it is a common practice to download many files of a given file format when testing a parser, but those files are almost all valid files. If we learn likely invariants from the execution of a similar corpus, we will bias our invariants on the validity of the file format and, in some cases, this can be a mistake because we miss interesting partitions of the basic block states related to invalid inputs.

As our technique aims to improve the ability of a fuzzer to better explore the code regions that were already reached executing the initial corpus, an interesting choice is to mine invariants over the corpus of another fuzzer.

This is interesting because we have nowadays fuzzers that reach very good coverage in a reasonable time (e.g.~\cite{redqueen}~\cite{sebastian}).

A derived problem is then when should we stop the first fuzzer and apply our technique?

We can randomly select a time window, or wait that the fuzzer saturates in coverage~\cite{saturation}. 

The saturation is a known problem in fuzzing, and doing the latter can help to cope with this problem, but sometimes a fuzzer can get stuck in the opposite way: it will continue to find new coverage and fail to deeply explore a single code region. In this case, a random timeout is reasonable, maybe combined with the partial instrumentation of some selected regions of the programs to avoid this problem.

\section{Discussion}

With the proposed feedback in this chapter, an edge can be registered as different feedback values and introduce novelty more than a single time. To do that, we divide the space described by the IR values used in the incoming basic block using functions over these values. These functions must meaningfully represent interesting properties of such space, and so we used learned likely invariants over the basic block.

We devised a set of algorithms to reduce the number of generated invariants that are redundant, that do not produce feedback or that relate unrelated variables.

The proposed technique aims to augment the feedback to use not only information about the control flow but also about the entire program state to better explore the possible states of the SUT during Fuzz Testing.

\newpage

\chapter{Implementation}
\label{cap:impl}

In this chapter, we provide an overview of the technologies used to implement our technique, \llvm~\cite{llvm}, \daikon~\cite{daikon}, and \aflpp~\cite{aflplusplus}, the general architecture of our implementation and some details about it.

\section{The Low Level Virtual Machine Infrastructure}

The {\em Low Level Virtual Machine Infrastructure} (\llvm)~\cite{llvm} is a compiler and toolchain infrastructure designed for easy development of compiler frontends for programming languages and backends for instruction set architectures. 

The core of \llvm is its intermediate representation (IR) that is frontend-agnostic, portable, and SSA-compliant. This IR allows compiler architects to implement optimizations and code analysis passes at many stages of the compilation pipeline in a completely language-independent flavor.

The generation of the machine code in \llvm can happen at compile-time, for each module, at link-time, enabling a wide range of aggressive inter-procedural optimization like arguments promotion~\cite{LattnerAdve:tutorial}, or even at run-time, using the \llvm just-in-time engine.

The supported backends at the time of writing are almost all the most used architectures, including X86, PowerPC, ARM, and SPARC, but also less known ones like Hexagon or WebAssembly.

Since our techniques are designed to work at the IR level, we can build an implementation that is architecture-independent.

\llvm has many available frontends for many programming languages too, like C, Rust, Go, C++, and Ada, that emit IR for the backend after language-dependent optimizations. The basic data types in the IR are integers and floats, and there are five built-in derived types: pointers, arrays, vectors, structures, and functions. If the language targeted by the frontend supports more data types, they are expressed in the IR as a combination of the standard IR datatypes.

The first-class citizen frontend of \llvm is \clang, the C, C++, Objective-C, and Objective-C++ frontend.

We built our prototype for C/C++ programs, so we rely on \clang as frontend.

The structure of the intermediate representation is SSA, as written before, that makes use of an infinite set of registers. The IR is also strongly typed and RISC-like.

\begin{example}

An example of human-readable IR function is the following:

\begin{lstlisting}[language=llvm]
define i32 @max(i32 %x, i32 %y) {
entry:
  %cmp = icmp sgt i32 %x, %y
  br i1 %cmp, label %if.then, label %if.else

if.then:
  br label %if.end

if.else:
  br label %if.end

if.end:
  %m.0 = phi i32 [ %x, %if.then ], [ %y, %if.else ]
  ret i32 %m.0
}
\end{lstlisting}

It is the translation of the SSA pseudocode discussed in Sec. \ref{sec:methdefs}.

All the variables are assigned only once and the main difference with the SSA pseudocode is the missing definition of the intermediate \verb|m| variables. \llvm does not support definitions of variables directly using another one because you can just use the original variable (\verb|%x| and \verb|%y| in our case) instead.

\end{example}

The \llvm infrastructure is modular: you can write a so-called \llvm pass that can manipulate the IR and it is invoked during various stages of the IR optimization. Passes can be in-tree, modifying the source tree of \llvm itself like the AddressSanitizer pass, or out-of-tree, that are shared object loaded at runtime.

\llvm incorporates also a debugger, a C++ standard library implementation, and a linker to enable link-time optimizations.

\section{The Daikon invariant detector}

\daikon is a dynamic miner of likely invariants, previously introduced in \ref{subsec:mining}. 

It relies on execution traces that are in a language-independent form, enabling the tool to operate on data produced by different tracers, most notably the Java tracer and the C tracer based on \valgrind~\cite{valgrind}, and even fictional traces that are not outcomes of the execution of a program.

It requires a declaration (\verb|decls|) file that describes the traced variables and one or more data trace (\verb|dtrace|) files that report the actual values of the traced variables.

In both formats variables are grouped by the program point in which they are observed, typically function enter and exit.

The pattern for a program point entry in the declaration file is\footnote{\url{https://plse.cs.washington.edu/daikon/download/doc/developer/File-formats.html\#Program-point-declarations}}:

\begin{verbatim}
ppt <ppt-name>
<ppt-info>
<ppt-info>
...
variable <name-1>
  <variable-info>
  <variable-info>
  ...
variable <name-2>
  ...
\end{verbatim}

The \verb|ppt-name| encodes the name plus its type, for instance \verb|name:::ENTER| for variables when entering a function, \verb|name:::EXIT| for variables when exiting a function, or \verb|name:::OBJECT| for object fields.
\verb|ppt-info| encodes properties such as if a method is private and the parent program point in the \verb|dataflow hierarchy| (e.g. an object program point is the parent of all the method program points of such object because a method can access the fields).

The variable names must be unique inside the program point, and the \verb|variable-info| holds information about the data type of the variable, the comparability (Sec. \ref{sec:comp}), the bounds if any, if it is a function parameter and other generic flags.

In a data trace, the corresponding program point entry has the following format:

\begin{verbatim}
<program-point-name>
this_invocation_nonce
<nonce-string>
<var-name-1>
<var-value-1>
<var-modified-1>
<var-name2>
<var-value-2>
<var-modified-2>
...
\end{verbatim}

The nonce is a progressive number to define a total order between such entries. The logged values for each variable are followed by the \verb|var-modified| field, which tells if the variable was modified since the last time it was logged.

\section{The AFL++ fuzzing framework}

{\em American Fuzzy Lop ++} (\aflpp)~\cite{aflplusplus} is a recent fork of the popular coverage-guided fuzzer \afl, that is not improved anymore with novel features from 2017.

\aflpp incorporates and sometimes reimplements some of the latest relevant research in Fuzz Testing, such as~\cite{mopt},~\cite{redqueen} and~\cite{aflfast}.

It supports many different instrumentation backends to extract coverage information from the target, for both compiler-based instrumentation and binary-only instrumentation.

To instrument a program during the compilation pipeline, \aflpp ships a \gcc plugin and a set of \llvm passes. For binary-only targets, \aflpp has two Dynamic Binary Translation backends that instrument the code during the recompilation stage of the JIT, one based on \qemu~\cite{qemu} and the other based on Unicorn Engine~\cite{unicornemu}.

It supports different types of code coverage, such as standard edge coverage, context-sensitive edge coverage, ngram coverage, and more. The standard edge coverage is logged in a shared map (\verb|__afl_area_ptr|) used as an observation channel. The number of hits for each edge is logged too.

The classic instrumentation, inherited from \afl, inserts a snippet similar to the following C code at each basic block:

\begin{lstlisting}[language=C]
void afl_maybe_log(unsigned cur_loc) {

  static __thread unsigned prev_loc = 0;
  __afl_area_ptr[cur_loc ^ prev_loc]++;
  prev_loc = cur_loc >> 1;

}
\end{lstlisting}

When using the compiler wrapper that uses the \llvm passes to instrument the code, the equivalent of this code is inserted inline at the IR level.

The \verb|cur_loc| parameter is generated at compilation time and identifies the basic block while \verb|prev_loc| maintains the information about the last executed block. So the shared map index \verb|cur_loc ^ prev_loc| is related to the currently executed edge.

\section{The \invscov pipeline}

We implemented our technique in a prototype called \invscov that stands for Invariants Coverage.

The \invscov pipeline is composed of several stages:

\begin{enumerate}
\item Dumper compilation: a first version of the program that dumps variables values is compiled;
\item Online learning: the dumper program is executed for each input in the corpus and \daikon performs online learning of the likely invariants;
\item Checks generation: the output of \daikon is processed and an object file with all the checks for each invariant is produced;
\item Target compilation: a second version of the program is compiled instrumenting the blocks with the classic \afl instrumentation plus the calls to the checks (generated C functions) needed for the invariants coverage;
\end{enumerate}


\subsection{Dumper compilation}

The dumper instrumentation is handled by an \llvm pass on functions and a runtime object.
This pass is the one that implements Algorithm \ref{alg:mergecomp} (Sec. \ref{sec:comp}) and also uses Range Analysis to learn the bounds of the integer IR values, when possible.

As an implementation choice, in this pass we reduced the number of IR values considered as variables for the invariants miner if at least one of the following properties hold:

\begin{itemize}
\item the value can be directly connected to a local variable in the source code using debug symbols;
\item the value is a not a constant index of the pointer operand of a GetElementPtr instruction \footnote{\url{https://llvm.org/doxygen/classllvm_1_1GetElementPtrInst.html}};
\item the value is related to a Load or Store instruction (both pointers and values);
\item the value is the return value of the function;
\end{itemize}

During the compilation, the pass dumps the information about program points and variables, such as type, comparability, and bounds, in a JSON file for each module. Then, each file is processed and merged to produce the declaration file for \daikon.

When executed, the dumper binary outputs a dtrace file related to the execution.

\begin{example}

Consider this example function in C language:

\begin{lstlisting}[language=C]
int isgreater(int x, int y) {

  if (x > y)
    return 1;
  return 0;

}
\end{lstlisting}

Converted to \llvm IR using \clang it is:

\begin{lstlisting}[language=llvm]
define dso_local i32 @isgreater(i32 %x, i32 %y)
    local_unnamed_addr #0 !dbg !16 {
entry:
  call void @llvm.dbg.value(metadata i32 %x, metadata !20,
                            metadata !DIExpression()), !dbg !22
  call void @llvm.dbg.value(metadata i32 %y, metadata !21,
                            metadata !DIExpression()), !dbg !22
  %cmp = icmp sgt i32 %x, %y, !dbg !23
  %. = zext i1 %cmp to i32, !dbg !22
  ret i32 %., !dbg !25
}
\end{lstlisting}

{\color{black} 

The \lstinline[columns=fixed, language=llvm]{llvm.dbg.value} intrinsic calls for the values \%x and \%y tell that these values are related to some variables in the source code, x and y.

In this small example, the instrumented values are \%x and \%y, because they are related to the source code variables, and \%., because it is the return value.

The program points in which each variable is logged are basic blocks and not functions. This is trivial to achieve just generating the decl and the dtrace files like if each basic block is a separate function, logging two times the variables at the end of the block to comply with the \daikon formats that need an \verb|ENTER| and at least one \verb|EXIT|.

}

\end{example}


\subsection{Online learning}

We patched \daikon version 5.8.3 to enable online learning with the dumper binary. Logging all the dtrace files produced by the dumper binary for each input can be expensive in terms of disk space. Instead, we run the dumper inside \daikon and process the outcome on the fly.

After the learning, the invariants are saved as textual output in order to be processed in the next stage.


\subsection{Checks generation}

The file with the invariants that are generated by \daikon is then parsed in this stage.

This file is textual, and for each program point lists the relations, that are expressions over one or more variables.

A simple one can be just \verb|LOC_x > 1| but more complex invariants are possible, such as \verb|LOC_x^2 + 3 * LOC_y - LOC_z >= 0|.

We parse this textual representation using regular expressions to locale the variable names and substitute them with C variables when generating C functions for each check.

The generated function is named \verb|__daikon_constr_ID| where ID is a unique number identifying the invariant.

The return value is the ID shifted by 1 if the invariant is violated, 0 otherwise.

\begin{example}

A generated function looks like the following C snippet:

\begin{lstlisting}[language=C]
uint32_t __daikon_constr_123(int32_t v0) {

  if (!(v0 > 1))
    return 123 << 1;
  return 0;

}
\end{lstlisting}

In this case, the ID is 123.

\end{example}

The script that generates the code creates also a JSON file describing each generated function for the next stage.

\subsection{Target compilation}

The final binary, ready to be fuzzed by \aflpp, is compiled using an \llvm pass that takes into account the outcome of the checks generation phase.

The \llvm values involved in the invariants are retrieved using their name that is the same between the dumper pass and the present one.

In this pass, the \verb|prev_loc| variable of the \aflpp instrumentation that tracks the incoming block when logging an edge is XOR-ed with the return value of each check function. When the invariant is not violated, the function returns 0, which results in normal edge coverage.

At the end of each block, the inserted instrumentation looks like the one in the following pseudocode:

\begin{lstlisting}[language=C]
__afl_area_ptr[cur_loc ^ prev_loc]++;
prev_loc = cur_loc >> 1;
prev_loc ^= __daikon_constr_123(variable1);
prev_loc ^= __daikon_constr_321(variable2, variable3);
...
\end{lstlisting}

Whenever an invariant is shared between two blocks and one dominates the other, the dominated block does not call again the function that performs the check, but directly reuses the outcome of the previously called function in the dominator.

All the instrumentation code inserted by the pass is marked with the \verb|nosanitize| metadata to avoid to be instrumented by sanitizers.

\begin{example}

At the end of the pipeline, the \lstinline[columns=fixed, language=C]{isgreater} function used in the previous examples, is compiled to \llvm IR as follows:

\begin{lstlisting}[language=llvm]
define dso_local i32 @isgreater(i32 %x, i32 %y)
    local_unnamed_addr #0 !dbg !7 {
entry:
  : AFL++ instrumentation
  %0 = load i32, i32* @__afl_prev_loc, !dbg !14, !nosanitize !2
  %1 = load i8*, i8** @__afl_area_ptr, !dbg !14, !nosanitize !2
  %2 = xor i32 %0, 2620, !dbg !14
  %3 = getelementptr i8, i8* %1, i32 %2, !dbg !14
  %4 = load i8, i8* %3, !dbg !14, !nosanitize !2
  %5 = add i8 %4, 1, !dbg !14
  %6 = icmp eq i8 %5, 0, !dbg !14
  %7 = zext i1 %6 to i8, !dbg !14
  %8 = add i8 %5, %7, !dbg !14
  store i8 %8, i8* %3, !dbg !14, !nosanitize !2
  store i32 1310, i32* @__afl_prev_loc, !dbg !14
  ; actual isgreater code
  call void @llvm.dbg.value(metadata i32 %x, metadata !12, metadata !DIExpression()), !dbg !14
  call void @llvm.dbg.value(metadata i32 %y, metadata !13, metadata !DIExpression()), !dbg !14
  %cmp = icmp sgt i32 %x, %y, !dbg !15
  %. = zext i1 %cmp to i32, !dbg !14
  ; check invariant and update prev_loc
  %9 = call i32 @__daikon_constr_1(i32 %.), !dbg !17
  %10 = load i32, i32* @__afl_prev_loc, !dbg !17, !nosanitize !2
  %11 = xor i32 %10, %9, !dbg !17
  store i32 %11, i32* @__afl_prev_loc, !dbg !17, !nosanitize !2
  ret i32 %., !dbg !17
}
\end{lstlisting}

There is a single invariant, checked by \lstinline[columns=fixed, language=C]{__daikon_constr_1}, and the return value of this function is XOR-ed with \verb|prev_loc|.

\end{example}


\chapter{Evaluation}
\label{cap:eval}

In this chapter, we present the results of a preliminary experimental investigation of our prototype \invscov. We choose as metrics the heuristically {\em triaged bugs} in a given time window to evaluate the ability to catch faults, and {\em executions per second} to evaluate the overhead.

\section{Setup and Dataset}

All the experiments in this chapter were run on an x86\_64 machine with the Intel(R) Xeon(R) Platinum 8160 CPU at 2.10GHz and 32 GiB of RAM. The operating system used was Ubuntu 18.04 with the kernel version 4.15.

We selected a set of real-world applications to evaluate our prototype \invscov. Table \ref{tab:programs} lists the chosen programs. In the first part there are real-world programs, often old versions, that may contain bugs with a high probability. Most of them were part of evaluations in previous works from the literature (e.g.~\cite{vuzzer}~\cite{ankou}). In the second part, there are targets with known vulnerabilities taken from Fuzzer Test Suite~\cite{fts}, which are old versions of real-world programs too.

We report in Table \ref{tab:programs} the list of the programs, their versions, and the sanitizers used during the compilation of the subjects. The missing usage of UBSan on some programs is due to shallow bugs in those subjects that make them crash even with simple valid inputs. In this case, we opted to remove the sanitizer to allow the usage of the program in our evaluation.

In Table \ref{tab:cmd} we list the command line parameters used to run the programs. The subjects from Fuzzer Test Suite are not in this table because all of them use a \libfuzzer harness from OSS-Fuzz.

As described in Sec. \ref{sec:corpus}, the choice of the initial corpus for \invscov not trivial. 

\begin{table}[H]
\centering
\begin{tabular}{ |c|c|c| } 
 \hline
 Program & Version  & Sanitizers \\ 
 \hline
 jasper & 2.0.16 & ASan, UBSan \\ 
 autotrace & 4333e37d5040881b19c2e1dad221f8e988419932 & ASan, UBSan \\ 
 cflow & 8a75c3721fd38f8d278cd71fb3682ead1497bb46 & ASan, UBSan \\ 
 
 catppt (catdoc) & 0.95 & ASan, UBSan \\ 
 xls2csv (catdoc) & 0.95 & ASan, UBSan \\ 
 
 pdf2cairo (poppler) & 53368f1717e88e40fe65d27e919c9abca11beac3 & ASan, UBSan \\ 
 potrace & 1.16  & ASan, UBSan \\ 
 pspp & 53d339111a9f51561cfccc65764874cdf54e501a & ASan \\
 exiv2 & 356f8627371e10cb8719eba3c45789e67420b10a & ASan, UBSan \\ 
 
 sndfile (libsndfile) & 2ccb23fe724d1d946b4e0c51b791cc655da6962e & ASan, UBSan \\ 
 
 \hline
 
 lcms & f9d75ccef0b54c9f4167d95088d4727985133c52 & ASan, UBSan \\
 re2 & 499ef7eff7455ce9c9fae86111d4a77b6ac335de & ASan, UBSan \\
 boringssl & 894a47df2423f0d2b6be57e6d90f2bea88213382 & ASan, UBSan \\
 
 guetzli & 9afd0bbb7db0bd3a50226845f0f6c36f14933b6b & ASan, UBSan \\
 libxml2 & v2.9.2 & ASan, UBSan \\
 woff2 & 9476664fd6931ea6ec532c94b816d8fbbe3aed90 & ASan \\
 
 libarchive & 51d7afd3644fdad725dd8faa7606b864fd125f88 & ASan \\
 pcre2 & 183 & ASan, UBSan \\
 
 \hline
\end{tabular}
\caption{Target programs versions and sanitizers.}
\label{tab:programs}
\end{table}

\begin{table}[H]
\centering
\begin{tabular}{ |c|c| } 
 \hline
 Program & Command line \\ 
 \hline
 jasper & -f @@ -t jp2 -T mif -F /dev/null \\ 
 cflow & --no-main @@ \\
 catppt & @@ \\
 xls2csv & @@ \\
 pdf2cairo & -tiff @@ out \\
 potrace & -b pdf -c -q @@ -o /dev/null \\
 pspp & -O format=txt -o /dev/null -b @@ \\
 exiv2 & - (OSS-Fuzz harness) \\
 sndfile & --cart --instrument --broadcast @@ \\
 \hline
\end{tabular}
\caption{Command line used to run the target programs.}
\label{tab:cmd}
\end{table}

We opted to simply set 12 hours as the time window and run \afl to produce the corpus. This choice, of course, is not optimal in general but allows us to evaluate the technique on a large set of programs, reducing the manual work needed to observe when a fuzzer saturates for every single target. We leave the evaluation with incremental fuzzing after saturation with \invscov to future work.

All the benchmarks in this chapter are run with a time window of 48 hours and each experiment was repeated 3 times. The reported numbers are the median values.


\section{Efficiency in finding faults}
\label{sec:bugs}

\begin{table}[h]
\centering
\begin{tabular}{ |c|c|c|c| }
 \hline
 Program & \aflpp bugs & \aflpp \invscov bugs & Intersection \\ 
 \hline
 jasper     & 27 & 28 & 20 \\
 autotrace  & 30 & 28 & 28 \\ 
 cflow      & 6 & 6 & 4 \\ 
 catppt     & 4 & 8 & 4 \\ 
 xls2csv    & 19 & 29 & 18 \\ 
 pdf2cairo  & 25 & 27 & 21 \\ 
 potrace    & 1 & 0 & 0 \\ 
 pspp       & 34 & 17 & 12 \\ 
 exiv2      & 45 & 50 & 34 \\ 
 sndfile    & 18 & 21 & 18 \\ 
 \hline
 lcms       & 0 & 1 & 0 \\ 
 re2        & 1 & 0 & 0 \\ 
 boringssl  & 6 & 6 & 6 \\ 
 guetzli    & 3 & 1 & 1 \\ 
 libxml2    & 16 & 17 & 13 \\ 
 woff2      & 2 & 3 & 2 \\ 
 libarchive & 0 & 0 & 0 \\ 
 pcre2      & 104 & 122 & 57 \\ 
 \hline
 Total & 341 & 364 & 238 \\
 \hline
\end{tabular}
\caption{Triaged bugs found during the 48h experiments.}
\label{tbl:bugs}
\end{table}

In this section, we evaluate the ability of \invscov to find more or different bugs than the baseline \aflpp in a given time window of 48 hours.

As the number of reported crashes by the two fuzzers is high, we opted to automatically triage the crashes to the number of bugs heuristically using the hash of the call stack \footnote{The hash represents the sequence of functions that are concurrently active on the run-time stack at a given moment~\cite{callcontext}.} registered when the program crashes.
While this is a less sound metric than counting ground-truth bugs (e.g.~\cite{magma}), it was used in recent past works like~\cite{ankou} successfully.

To be more sound, we removed the addresses belonging to the C and the C++ standard libraries from the call stacks to reduce false positives.

In Table \ref{tbl:bugs} we report the number of uncovered bugs for \aflpp and \aflpp with \invscov. We report the intersection between the sets too, as a fuzzer that finds less but different bugs than another are still interesting.

\section{Performance overhead}

\begin{figure}[h]
  \includegraphics[scale=0.85]{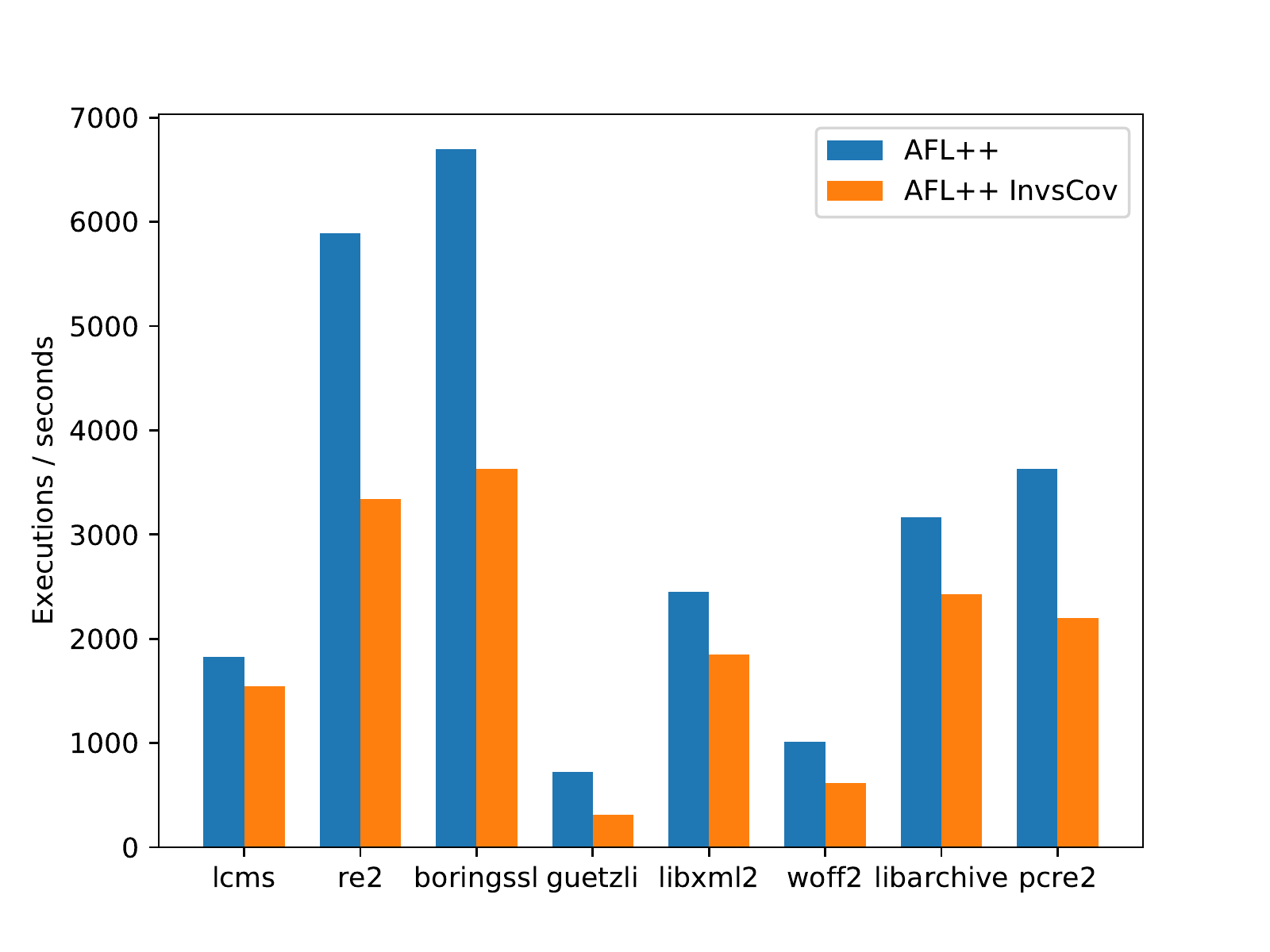}
  \centering
  \caption{Comparison in medium speed over a 48h experiment.}
  \label{fig:speedchart}
\end{figure}

In this section, we evaluate the performance overhead in terms of speed.

We compare the medium speed registered by the fuzzer (executions per second) for an entire 48-hour run. The chosen subjects are the targets used in Sec. \ref{sec:bugs} that come from Fuzzer Test Suite. These targets use the harnesses from OSS-Fuzz and so \aflpp can execute them in Persistent Mode, allowing a more sound comparison in terms of speed without the overhead of \verb|fork()| operations.

In Figure \ref{fig:speedchart} the medium executions per second of \aflpp and \aflpp \invscov are compared. The gap is almost never more than 2x, even smaller for targets with medium speed (2000 execs/sec).

The medium overhead over the set of benchmarks is 1.62x.

\section{Discussion}

From the evaluation of our technique for the ability to trigger faults, it appears that our technique performs well when the coverage produced by the initial corpus already covers most of the program. In targets like potrace, our technique fails to uncover a bug that is in a code region not covered by the initial corpus. On other targets, such as pcre2, \invscov performs well and uncovers more and different faults than vanilla \aflpp.

This highlights that the limitation in speed showed in the second part of the evaluation, can sometimes decrease the performance in uncovering bugs in new program points. This can be considered a limitation, or not because the purpose of the technique is not to be a replacement for Coverage-guided Fuzzing, but an incremental step instead. The problem can be addressed with a wiser choice of the initial corpus, taking the testcases of an already saturated fuzzer such as \aflpp itself.

The overall results suggest that \invscov improves the state-of-the-art of Fuzz Testing in terms of findings faults especially when the code is already explored by the initial corpus.

\chapter{Conclusion}
\label{cap:conclusion}

In this thesis, we introduced a new feedback for Feedback-driven Fuzz Testing in order to approximate the program state coverage better than traditional Coverage-guided Fuzzing.

Reaching a program point does not guarantee the discovery of a fault in such a portion of code: we proposed to distinguish the same program point in the registered coverage if the values in the program state are unusual using likely invariants.

The likely invariants are mined local constraints that, if violated, may uncover the presence of a bug or a local unusual state. In both cases, they define a meaningful division of the possible values used in the corresponding program point.

In our technique, we learn invariants at the granularity of basic blocks to define a feedback that combines edge coverage and invariants violations.

In this way, we were able to augment the sensitivity of the coverage feedback of Feedback-driven Guided Fuzzing taking into account data and not only code without the typical path explosion issue that characterizes coverage types based on data tracking.

The developed prototype, \invscov, extends the compilation pipeline of \llvm to produce binaries that can be used to learn invariants and to record invariants and code coverage for \aflpp.

We showed that the prototype works on a set of real-world benchmarks producing fully functional binaries that can be easily fuzzed uncovering more of different types of faults than vanilla \aflpp with a reasonable performance overhead.

Based on our results, in general, we can devise that augmenting the sensitivity of a feedback, automatically and not just manually for a small set of program points and variables, with a sane amount of useful information can improve the search algorithm of the fuzzer.

\section{Future directions}

We foresee two directions of improvement: one is technical, the other concerns the implementation.

Firstly, our methodology can grow to learn the invariants on-demand during the run of the fuzzer. Instead of being a preprocessing step before the run of the fuzzer, we can adapt it to modify the target program on-the-fly with discovered likely invariants. In this way, we can use our technique on newly discovered code too, and not only on program points that are already covered by the execution of the input corpus.

Another avenue for technical improvement is the tracking of the side effect in the memory state. We employ such abstraction to avoid logging each value in the memory state, which would be an impossible solution in practice. In the future, we can augment such a method to not only track the values in the basic block state, which contains the side effects in memory, but also other interesting values that are implicitly related to the program state but not directly used in the block. For instance, consider the size of a buffer in the heap that is not used as a variable in a basic block that simply performs access to such buffer. The buffer size is still an interesting value to track because it has an indirect relationship with the memory address used in the block.

On the implementation side, we should replace the \daikon invariant detector with a more performant engine that uses the GPU. \daikon is a mono thread program in Java that runs on the CPU, while its algorithm can be reimplemented to exploit the parallelization opportunities from modern machines.

Another future direction is the implementation of \invscov on the IR of an emulator. The \llvm implementation rarely uses information about the source code and so our technique, that does not depend on the particular frontend language, can be easily reimplemented as an instrumentation pass of a JIT compiler of a dynamic binary translator like \qemu.

\cleardoublepage
\phantomsection
\bibliographystyle{sapthesis} 
\bibliography{bibliography} 

\end{document}